\documentclass[letterpaper,11pt]{article}
\usepackage[english]{babel}

\usepackage{amsmath}
\usepackage{amsthm}
\usepackage{enumitem}
\usepackage{amssymb}
\usepackage[margin=2.4cm]{geometry}
\usepackage{todonotes}
\usepackage{graphicx}
\usepackage{bbm}
\usepackage{subcaption}
\usepackage{xcolor}
\usepackage{hyperref}
\usepackage{nicefrac}
\hypersetup{
	colorlinks,
	linkcolor={red!50!black},
	citecolor={blue!50!black},
	urlcolor={blue!80!black}
}
\usepackage[nameinlink]{cleveref}
\usepackage{thm-restate}
\usepackage[linesnumbered,ruled,vlined]{algorithm2e}
\Crefname{algocf}{Algorithm}{Algorithms}
\newtheorem{theorem}{Theorem}
\newtheorem{lemma}[theorem]{Lemma}
\newtheorem{thm}[theorem]{Theorem}
\newtheorem{corollary}[theorem]{Corollary}

\theoremstyle{remark}
\newtheorem{claim}[theorem]{Claim}
\newcommand{\claimproof}[2]{\begin{proof}[Proof of #1]\renewcommand{\qedsymbol}{$\diamond$}#2\end{proof}}

\newcommand{\ma}[1]{{\color{red} #1}}

\usepackage[linesnumbered,ruled,vlined]{algorithm2e}

\usepackage{tikz}
\usetikzlibrary{patterns,patterns.meta,decorations.pathreplacing}
\tikzset{
  picked/.style={
    pattern={Lines[angle=45,distance=2pt]},
    pattern color=gray
  }
}
\tikzset{
  milestone/.style={
    pattern={Hatch[angle=45,distance=1pt]},
    pattern color=gray
  }
}

\Crefname{algocf}{Algorithm}{Algorithms}

\allowdisplaybreaks

\newcommand{\eps}{\varepsilon}
\newcommand{\opt}{{\rm OPT}}

\newcommand{\cost}{{\rm cost}}

\newcommand{\N}{\mathbb N}
\newcommand{\R}{\mathbb R}
\newcommand{\OPT}{\mathrm{OPT}}

\newcommand{\NA}{\mathrm{NA}}
\newcommand{\set}{\mathrm{set}}

\DeclareMathOperator{\E}{E}

\bibliographystyle{plainurl}

\title{Non-Adaptive Evaluation of $k$-of-$n$ Functions:\\ Tight Gap and a Unit-Cost PTAS}
\author{Mads Anker Nielsen\thanks{Department of Mathematics and Computer
    Science, University of Cologne, Cologne, Germany. Email: \texttt{m.nielsen@uni-koeln.de}.} \and Lars
    Rohwedder\thanks{Department of Mathematics and Computer Science,
    University of Southern Denmark, Odense, Denmark. Email: \texttt{rohwedder@imada.sdu.dk}.} \and Kevin
    Schewior\thanks{Department of Mathematics and Computer
    Science, University of Cologne, Cologne, Germany, and Department of Mathematics and Computer Science,
    University of Southern Denmark, Odense, Denmark. Email: \texttt{schewior@cs.uni-koeln.de}.}
}


\begin{document}
\maketitle 

\begin{abstract}
    We consider the Stochastic Boolean Function Evaluation (SBFE) problem in the well-studied case of $k$-of-$n$ functions: There are independent Boolean random variables $x_1,\dots,x_n$ where each variable $i$ has a known probability $p_i$ of taking value $1$, and a known cost $c_i$ that can be paid to find out its value. The value of the function is $1$ iff there are at least $k$ $1$s among the variables. The goal is to efficiently compute a strategy that, at minimum expected cost, tests the variables until the function value is determined. While an elegant polynomial-time exact algorithm is known when tests can be made adaptively, we focus on the non-adaptive variant, for which much less is known.

    First, we show a clean and tight lower bound of $2$ on the adaptivity gap, i.e., the worst-case multiplicative loss in the objective function caused by disallowing adaptivity, of the problem. This improves the tight lower bound of $3/2$ for the unit-cost variant.

    Second, we give a PTAS for computing the best non-adaptive strategy in the unit-cost case, the first PTAS for an SBFE problem. At the core, our scheme establishes a novel notion of two-sided dominance (w.r.t.\ the optimal solution) by guessing so-called milestone tests for a set of carefully chosen buckets of tests. To turn this technique into a polynomial-time algorithm, we use a decomposition approach paired with a random-shift argument.
    
    In fact, our PTAS extends to the class of arbitrary \textit{symmetric} Boolean functions, which are Boolean functions whose value only depends on the number of $1$s among the input variables.
\end{abstract}

\section{Introduction}

The Stochastic Boolean Function Evaluation (SBFE) problem is a fundamental problem in stochastic combinatorial optimization, see e.g.\ the surveys by Ünlüyurt~\cite{unluyurt2004sequential,Unluyurt25} or the more recent works~\cite{deshpande2016approximation,GhugeGN25,hellerstein2024quickly,TanXN25,HarrisNT25,KelesHMMY26,HellersteinPS26}. A function $f:\{0,1\}^n\to\{0,1\}$ is given (typically in compact representation), and the task is to find out $f(x_1,\dots,x_n)$ where $x_1,\dots x_n$ are independent Boolean random variables. For each $i\in[n]$, $p_i\in(0,1)$ is the known probability that $x_i=1$, and the value of $x_i$ can be learned by a policy at known cost $c_i\geq 0$. 

There are two fundamental paradigms for policies: the adaptive one and the non-adaptive one. An adaptive policy may make decisions to test variables depending on the outcomes of previous tests, i.e., it can be viewed as a decision tree. A non-adaptive policy, in contrast, is simply specified by an order of variables to test, which may not be adapted depending on the outcomes of tests. In either case, policies are evaluated by the expected cost paid until the value of $f$ is determined (with probability $1$). While non-adaptive policies have a simple representation, are easy to execute, and are therefore often desirable from a practical point of view, they are generally subobtimal. The adaptivity gap~\cite{DeanGV08,hellerstein22} of a class of functions measures the severity of precisely this suboptimality: the worst-case (in this class of functions) multiplicative gap between the expected cost of the best non-adaptive policy and that of the best adaptive policy. 

An important class of functions are $k$-of-$n$ functions. Such a function is simply given by an integer $k\in[n]$, and the function value is $1$ if and only if there are at least $k$ $1$s among the input variables, i.e., $x_1 + \cdots + x_n \ge k$. SBFE has also been considered for a number of more general classes of functions, e.g., linear threshold functions~\cite{deshpande2016approximation,jiang2020algorithms,GhugeGN25}, symmetric Boolean functions~\cite{gkenosis22stochastic,GhugeGN25,liu20226approximation,plank2024simple}, and voting functions~\cite{hellerstein2024quickly}. For many of these classes, polynomial-time approximation algorithms to compute the best non-adaptive or adaptive policy have been proposed.

Apart from it occurring as a special case of many SBFE problems studied in the literature, another reason for the popularity of $k$-of-$n$ functions may be the elegance of the optimal policy~\cite{SalloumBreuer84,BenDov81}: Conditional on function value $0$ and $1$, it is optimal to test in increasing order of $c_i/(1-p_i)$ and $c_i/p_i$ ratios, respectively, to find a certificate at minimum expected cost. Since these policies need to test at least $n-k+1$ and $k$ tests, respectively, to find a certificate, there is, by the pigeon-hole principle, some test occurring in both these prefixes, which can safely be tested even in the original (unconditional) case.

When non-adaptive policies come into play, however, much less is known. First, while it is known that, in the unit-cost case the adaptivity gap is exactly $3/2$~\cite{grammel2022,plank2024simple}, no stronger lower bound for the general case is known, with the upper bound only being $2$~\cite{gkenosis2018stochastic}. Second, the polynomial-time approximation algorithms implied by the upper-bound proofs on the adaptivity gaps are those with the best known approximation ratios, i.e., $3/2$ in the unit-cost case and $2$ in the general case. In this paper, we make significant progress on both questions.

Regarding the second question, we also make progress on the aforementioned symmetric Boolean functions. These are functions whose value only depends on the number of $1$s among the input variables.
For arbitrary costs, the state-of-the-art approximation guarantee of a polynomial-time algorithm is $5.829$~\cite{plank2024simple}.
Again, this algorithm computes a non-adaptive policy but the guarantee compares to the best adaptive policy.
In the unit-cost case, the guarantee can be improved to $2$~\cite{grammel2022}.
Another special case of this class of functions is the unanimous-vote function, for which exact polynomial-time adaptive~\cite{gkenosis2018stochastic} and, in the unit-cost case, non-adaptive~\cite{KelesHMMY26} algorithms are known.

\subsection{Our Contribution}

Our first result is the following.

\begin{restatable}{theorem}{gapmain}
\label{thm:gapmain}
The adaptivity gap of SBFE on $k$-of-$n$ functions is exactly $2$.
\end{restatable}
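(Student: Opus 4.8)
\emph{Proof plan.}\quad The upper bound of $2$ is already known~\cite{gkenosis2018stochastic}, so the task is to construct, for every $\eps>0$, a $k$-of-$n$ instance on which the best non-adaptive policy costs at least $(2-\eps)$ times the best adaptive policy. Since the unit-cost gap is exactly $3/2$~\cite{grammel2022,plank2024simple}, such a family must use non-uniform costs in an essential way. The guiding idea is \emph{asymmetric hedging}: include a small, cheap block of ``signal'' variables with $p_i=1/2$ whose outcomes reveal towards which of several certificate options the value of $f$ will be decided, together with ``certificate'' blocks whose costs are scaled so that an adaptive policy, after reading the signal, pays for only one block, whereas any fixed order is forced to pay for a second block's worth of tests before it can certify. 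I expect to need a family parametrised by a number of nested ``decision levels'' (or by a cost ratio) and to let this parameter grow so that the gap tends to $2$; that the unit-cost gap is bounded away from $2$ is precisely the signal that the cost spread must do the work.

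First I would fix the instance: (i) a cheap signal block with $p_i=1/2$ whose joint outcome determines a random ``level'' $J$ with geometrically decaying tail; (ii) for each level $j$ a ``$1$-side'' block with $p_i$ near $1$ and a ``$0$-side'' block with $p_i$ near $0$, with per-level costs scaled geometrically in $j$; and (iii) the threshold $k$ tuned so that, conditioned on the (overwhelmingly likely) typical behaviour of the near-deterministic variables, a certificate for $f$ can be completed essentially only by exhausting the relevant block of the realised level $J$, with earlier levels' blocks contributing the \emph{wrong} kind of literal (ones where zeros are needed, or vice versa) and hence providing no shortcut. I would then bound $\OPT$ from above by the explicit adaptive policy ``test all signal variables, then test the relevant block of the revealed level and side until a certificate appears'': its expected cost telescopes over the levels down to (essentially) the expected cost of a single relevant block. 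If desired, optimality of this policy up to lower-order terms can instead be read off from the Salloum--Breuer/Ben-Dov structure of the optimal adaptive policy, whose common-prefix test is a signal variable until those are exhausted and a level-$J$ block variable thereafter.

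The crux --- and the step I expect to be the main obstacle --- is the matching lower bound on \emph{every} non-adaptive policy, i.e.\ on every fixed order $\pi$. Fixing $\pi$, I would show that on all but an $o(1)$ fraction of the probability mass, $\pi$ cannot certify $f$ until it has tested not only the relevant block of the realised level $J$ but also the relevant block of every earlier level whose block $\pi$ happens to place before it --- this is where the ``no shortcut'' property of (iii) is used. Summing the block costs and optimising over where $\pi$ inserts each block, an exchange/averaging argument that exploits the geometric cost scaling together with the geometric tail of $J$ yields $\E[\cost(\pi)]\ge(2-\eps)\OPT$. The delicate parts are: (a) making the $k$-of-$n$ arithmetic in (iii) realise the ``no shortcut'' behaviour simultaneously for all levels and both sides while keeping $\PP[f=1]$ bounded away from $0$ and $1$; and (b) ruling out \emph{all} clever orders --- front-loading the signal block, interleaving blocks, testing expensive blocks early, and so on --- which calls for a robust exchange argument rather than a comparison against a single ``natural'' order, supported by concentration for the signal-block sum and a union bound over the $p_i\approx0,1$ rare events.
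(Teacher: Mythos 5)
There is a genuine gap, and it sits exactly where you flag it: the lower bound against \emph{every} fixed order is deferred to an unspecified ``exchange/averaging argument,'' and the mechanism you propose to make it work --- item (iii), per-level certificate blocks arranged so that earlier levels supply ``the wrong kind of literal'' and hence no shortcut --- cannot be realized inside a $k$-of-$n$ function. A $k$-of-$n$ function has no level structure: every $1$ anywhere counts toward the $1$-certificate and every $0$ toward the $0$-certificate, so a near-$1$ variable from any block helps complete any $1$-certificate; the only handles you have are probabilities (near $0$, near $1$, or $1/2$) and costs. With geometric per-level costs, both the adaptive policy and any reasonable fixed order test the cheap certificate variables first, and nothing in the plan shows the ratio of such a multi-level scheme tends to $2$ rather than stalling below it --- note that the natural two-block version, in which the entire relevant block is always needed, gives only $3/2$ even with free signal variables. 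Your instinct that non-uniform costs are essential is right, but only in the minimal sense that the signal variables must be free; no cost hierarchy among the paid tests is needed.

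The paper's construction is essentially a one-level version of your outline with a different source of hardness: $k=m+t$, $n=2m+2t$, with $2m$ cost-$0$ variables of probability $1/2$, plus $t$ unit-cost variables with $p=1-\eps$ and $t$ with $p=\eps$, where $m\gg t$. Conditioned on the only relevant event (the free tests do not already determine $f$), the residual number of $1$s (resp.\ $0$s) still required is, in the limit $m\to\infty$, uniform on $\{1,\dots,t\}$ on a uniformly random side. The adaptive policy observes this after the free tests and pays $(t+1)/2$ in expectation. For \emph{any} economical non-adaptive order of the $2t$ paid tests, the positions at which it stops in the $2t$ conditional scenarios are the positions of the $j$-th $1$-variable and the $j$-th $0$-variable in that fixed order, $j\in[t]$, and these positions partition $[2t]$; hence its conditional expected cost is exactly $(2t+1)/2$ \emph{independently of the order}, with no exchange argument and no concentration beyond a binomial local-limit computation. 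Letting $t\to\infty$ gives the gap $2$. So the missing idea in your plan is precisely this uniformly distributed residual requirement hidden from the non-adaptive player, which is what lets one beat the $3/2$ obstruction without any geometric cost scaling; as written, your step (iii) and the concluding averaging step do not go through.
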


This settles an open question that had been known within the community and is explicitly stated in~\cite{plank2024simple}. Since an upper bound of $2$ was known~\cite{gkenosis2018stochastic}, our contribution  is showing a matching lower bound. Our construction is perhaps surprisingly simple.

To give an overview, let us first recall the class of instances for the unit-cost version which leads to a lower bound of 3/2~\cite{plank2024simple}. Here, $n=2t+1$ for some integer $t$, and $k=t+1$. There are $t$ \emph{$1$-variables} taking value $1$ and $t$ \emph{$0$-variables} taking value $0$. Note that we do not allow probabilities of precisely $0$ or $1$ in our model, but we do in this section for the simplicity of exposition. Such variables still have to be tested in order to use them towards certifying the function value. In addition, there is a single \emph{pivotal} variable with probability $1/2$, whose value thus determines the function value. An adaptive policy can simply test the pivotal variable, having, say, value $i$, and then test all the $i$-variables, at expected cost of $t+1$. A non-adaptive policy, on the other hand, can only ``guess'' the outcome of the pivotal variable, resulting in an expected cost of $\nicefrac32\cdot t+1$. Then taking the limit $t\to\infty$ yields the result.

Note that, in the arbitrary-cost case, we can simply assign the pivotal variable a cost of $0$, so that $t=1$ would actually suffice to obtain a bound of $3/2$. The main idea of our construction is to have several, say, $m$, pivotal variables, all with probability $1/2$ and cost $0$. It is easy to see that, with $t=1$, the resulting ratio is still $3/2$. The correct regime turns out to be the one in which $t$ is large but $m$ is much larger than $t$. In that case, if one is not done after performing the pivotal tests (which are free), the function value is determined to be $0$ or $1$ with probability $1/2$ each, and the number of tests required to prove that is (in the limit) uniformly distributed between $1$ and $t$. These quantities are revealed to an adaptive policy but not to a non-adaptive policy. Hence, in the limit, the cost of an adaptive policy is $\nicefrac 1t\cdot \sum_{i=1}^t i=(t+1)/2$ and for a non-adaptive policy, which is done at any step during the remaining $2t$ steps with equal probability, $\nicefrac 1{(2t)}\cdot \sum_{i=1}^{2t} i=(2t+1)/2$. The result then follows with $t\to\infty$.

Our second result is not structural but algorithmic.

\begin{restatable}{theorem}{ptasmain}
\label{thm:ptasmain}
There is a PTAS for computing the optimal non-adaptive policy for evaluating symmetric Boolean functions in the unit-cost case.
\end{restatable}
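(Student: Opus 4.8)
The plan is to design an enumeration-based approximation scheme that "guesses" enough structural information about an optimal non-adaptive policy $\OPT$ to reconstruct a $(1+\eps)$-approximate ordering in polynomial time. First I would set up the basic language: a non-adaptive policy is a permutation $\pi$ of $[n]$; stopping occurs once either $k$ ones or $n-k+1$ zeros have been seen among the tested variables. Since costs are unit, the expected cost equals $\sum_{j=1}^n \PP[\text{not stopped before testing the $j$th variable}]$. The key observation I would try to exploit is that, after conditioning on the number of ones seen so far, the only thing that matters about the tested prefix is how many tests have been made and how many ones they revealed; so $\OPT$ essentially solves a tradeoff between covering "the $1$-side" (wanting high-$p_i$ variables early, to hit $k$ ones fast when $f=1$) and "the $0$-side" (wanting low-$p_i$ variables early, to hit $n-k+1$ zeros fast when $f=0$). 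This is the two-sided structure the abstract alludes to.

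Next, following the hint about \emph{milestone tests} and \emph{buckets}, I would partition the tests into $O(\log_{1+\eps} n)$ or $O(1/\eps \cdot \text{polylog})$ buckets according to $p_i$ (geometric bucketing of $p_i$ and of $1-p_i$), and within each bucket observe that $\OPT$ uses the variables in some order, but for approximation purposes reordering within a bucket changes the relevant partial-sum statistics only by a $(1+\eps)$ factor. The "milestone" for a bucket would be the position (or the count) at which $\OPT$ has finished testing all the variables of that bucket that it tests within the first, say, $k$ tests versus the first $n-k+1$ tests — i.e., a small amount of data per bucket locating where that bucket's mass sits relative to both stopping thresholds. Guessing, for each of the $O(\mathrm{poly}\log)$ buckets, a milestone chosen from a polynomial-size grid of candidate positions gives a quasi-polynomial number of guesses; to get down to genuinely polynomial, I would use the stated \emph{decomposition approach paired with a random-shift argument}: split the instance/timeline into blocks, solve each block's guessing subproblem independently with only $O(1)$ milestones active per block, and use a random shift of the block boundaries so that in expectation the loss from tests straddling a boundary is an $\eps$-fraction of $\OPT$. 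Derandomize by trying all $O(n)$ shifts.

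Given a correct set of guesses, I would then show how to assemble the policy: the guesses pin down, for each bucket, an interval of ranks it must occupy, and subject to these interval constraints the best completion is obtained greedily (or by a small LP/flow), because once the bucketed profile is fixed the marginal value of placing a variable at a given rank is determined by monotone quantities. I would then prove the two-sided dominance claim: the resulting ordering, at every prefix length, has seen at least as many ones (in expectation / stochastically) as $\OPT$ up to lower-order error, and simultaneously at least as many zeros, hence its stopping time is stochastically dominated by that of $\OPT$ up to a $(1+\eps)$ factor, giving $\ALG \le (1+\eps)\OPT$. Finally I would observe the running time is $n^{O(\mathrm{poly}(1/\eps))}$ times the $O(n)$ shifts, a PTAS.

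The main obstacle I expect is making the "two-sided" guessing genuinely sufficient with only polynomially many guesses: a single milestone per bucket controls one side, but $k$-of-$n$ stopping is governed by \emph{both} thresholds simultaneously, and the prefixes of $\OPT$ relevant to the $1$-side (length up to $k$) and to the $0$-side (length up to $n-k+1$) overlap, so the guesses must be mutually consistent. Getting the decomposition to respect both sides at once — so that each block's subproblem is self-contained yet the random shift bounds the cross-block error for \emph{both} the ones-count and the zeros-count simultaneously — is the delicate part, and I anticipate it forces a careful choice of what exactly a "milestone" records (likely a pair: a position and a cumulative one-count) and a correspondingly careful charging argument in the dominance proof.
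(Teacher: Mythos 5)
There is a genuine gap: the core mechanism that makes the enumeration sufficient --- the two-sided dominance lemma --- is missing, and the substitute you propose does not support it. In the paper, the ``buckets'' are \emph{not} geometric classes of the probabilities; they are consecutive segments $V^\star_1,\dots,V^\star_b$ of the \emph{optimal test order}, and a ``milestone'' of a segment is the variable at the $\lfloor j\eps|V^\star_i|\rfloor$-th position of that segment when its variables are sorted by probability (so $1/\eps-1$ guessed variable indices per segment, $n^{O(1/\eps)}$ guesses). Given correct milestones, a forward greedy pass and a backward greedy pass reconstruct a set $V_i$ with $|V_i|\le(1+2\eps)|V^\star_i|$ such that every prefix union $\bigcup_{i'\le i}V_{i'}$ dominates $\bigcup_{i'\le i}V^\star_{i'}$ in \emph{both} counting senses ($|V\cap[h]|\ge|V^\star\cap[h]|$ and $|V\cap\{n-h+1,\dots,n\}|\ge|V^\star\cap\{n-h+1,\dots,n\}|$ for all $h$); a coupling argument then turns this exact set-dominance into $\Pr[\sum_{i\in V}x_i\ge\ell]\ge\Pr[\sum_{i\in V^\star}x_i\ge\ell]$ and the analogous inequality for zeros, hence $\Pr[\cost(\pi)>\ell']\le\Pr[\cost(\pi^\star)>\ell]$, with the $(1+O(\eps))$ loss entering only through the slightly larger prefix lengths. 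Your proposal replaces this with (a) geometric bucketing of $p_i$ and $1-p_i$ plus the claim that reordering within such a bucket costs only a $(1+\eps)$ factor, and (b) a greedy/LP completion ``subject to interval constraints.'' Claim (a) is unproven and problematic: probabilities may be exponentially close to $0$ or $1$, so the number of classes need not be polylogarithmic, and even for variables whose probabilities agree up to $(1+\eps)$ there is no argument that swapping them across positions changes the stopping-time distribution by only a $(1+\eps)$ factor; and even granting it, guessing ``where each probability class sits'' does not pin down the interleaving of classes in $\pi^\star$, which is exactly what the quantile-milestone construction and the forward/backward passes achieve. Step (b) has no dominance argument behind it, and you yourself flag the mutual consistency of the two sides as unresolved --- but that is precisely the theorem's crux, not a detail.

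Your second ingredient, the decomposition with a random shift, is directionally the same as the paper's, but it also needs the missing quantitative structure to yield a PTAS: the paper decomposes the \emph{cost axis} into ranges $[a_j,a_{j+1}]$ with $a_{j+1}/a_j\le 2^{1/\eps}$, solves a bounded variant on each range in time $n^{O_\eps(a_{j+1}/a_j)}$ (which again presupposes the dominance lemma, since the number of segments per range must be $O_\eps(a'/a)$), concatenates the resulting partial policies, and chooses among the $1/\eps$ interleaved geometric subsequences of $\{2^i\}$ the one with $\sum_j a_j\cdot\Pr[(1+\eps)\cost(\pi^\star)\ge a_j]\le 2\eps(1+\eps)\E[\cost(\pi^\star)]$, which bounds the seam cost of restarting at each $a_j$. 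Without the bounded-variant guarantee of the form $\Pr[\cost(\pi_j)\ge i]\le\Pr[(1+\eps)\cost(\pi^\star)\ge i]$ for all $i$ in the range, the charging of seam costs and the final telescoping do not go through, so the proposal as written does not constitute a proof.
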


Notably, this is the first PTAS for an SBFE problem. 
The PTAS relies on the idea of carefully enumerating a polynomial number of policies.
We denote the optimal policy by $\pi^\star$.
Ideally, we would like to guarantee that among the enumerated policies there is one policy $\pi$ which satisfies for all $i$ that
	\begin{equation*}
		\Pr[\cost(\pi) \ge i]\le \Pr[(1 + \eps) \cost(\pi^\star) \ge i].
	\end{equation*}
This would be enough to show that $\E[\cost(\pi)]\le (1 + \eps) \E[\cost(\pi^\star)]$.

It would be possible, albeit non-trivial, to get this property in quasi-polynomial time: View a policy as a sequence of buckets (containing tests) of exponentially increasing size, so that it essentially does not matter (in terms of a $(1+\eps)$-approximation) where one finishes in the bucket. Then, starting from the left, for each bucket, apply our core idea: In the corresponding bucket of $\pi^\star$, sort all variables by their probability. In this order, we guess $1/\eps$ equally spaced tests. Our main structural result is to show that if one has correctly guessed these equally spaced tests, then only using a few extra tests in each bucket (as compared to the corresponding bucket in OPT) we can achieve a strong domination property that we will describe below.

Consider any prefix $P$ of buckets of the algorithm and the corresponding prefix of buckets $P^\star$ of the optimum. Then our domination property says the following: There is an injection from the tests in $P^\star$ to those in $P$ such that, for those in $P$ the probability is at least that of the corresponding test in $P^\star$; moreover, the there also exists an injection for the other direction, i.e., one where the probability is at most as high. Crucially, this domination property allows us to argue that the probability that our policy determines $f$ by performing the tests in $P$ is at least as large as the probability of achieving the same goal by performing the tests in $P^\star$.

The reason that the described scheme is only a QPTAS is that the number of buckets is logarithmic and the algorithm guesses all combinations of possible milestones, across all buckets. To obtain a PTAS, we decompose our instance into subinstances that do not interact with each other. The subinstances are instances of a weaker variant, in which the above inequality only has to be fulfilled for a bounded range of values $i$, so that the number of buckets is also bounded. We use a random-shift argument to show that such a decomposition actually exists. Specifically, we show that we can find an offset of ranges at exponentially increasing times at which one can afford to ``reset'', i.e., create a new subinstance that does not interact with any of the previous ones.

\subsection{Further Related Work}

As we focus exclusively on $k$-of-$n$ and symmetric functions, we only briefly review the SBFE literature for function classes other than those mentioned above. One example are functions that take value $1$ iff there is an $s$-$t$ path in a given graph, whose edges exist iff some input random variable is $1$. This problem is NP-hard~\cite{fu2017determining,Guo0N0N24} due to connections to the $s$-$t$ reliability problem. Read-once formulas, which are equivalent to the same functions for series--parallel graphs, have also been extensively studied (see~\cite{unluyurt2004sequential} for an overview), but optimal algorithms have only been obtained in special cases; in particular, no constant-factor approximations are known in the general case.

We list a few standard techniques for obtaining approximation algorithms for SBFE problems, all of which inherently incur constant-factor multiplicative losses that are not useful towards designing a PTAS: round-robin approaches (e.g., \cite{kaplanMansour-Stoc05,allen2017evaluation,plank2024simple}), reducing to a submodular cover problem (e.g., \cite{deshpande2016approximation,Hellerstein2021tight,cuiNagarajan-SOSA23}), and a round-based approach (e.g., \cite{ImNZ16,EneNS17,GhugeGN25}).

While there have previously not been PTASs for SBFE problems, a PTAS has been shown to exist~\cite{SegevS22} for the batched variant~\cite{DaldalOSSU17}, a variant where tests can be batched together and there is an additional fixed cost for each batch, in the case of a conjunction ($1$-of-$n$ function). Other examples of stochastic combinatorial-optimization problems for which PTASs have been developed are stochastic probing and prophet problems~\cite{Segev021},  the Pandora's Box problem with non-obligatory inspection~\cite{BeyhaghiC23}, and stochastic scheduling problems~\cite{BuchemERSW24}, but it is unclear whether these techniques can be used for SBFE problems. We hope that our work will spark follow-up work on PTASs for SBFE problems.

Finally, we remark that NP-hardness for computing optimal non-adaptive policies for symmetric Boolean functions is \emph{not} known. In many other cases (e.g., voting functions), this state is the same, and approximation algorithms are developed as hardness is conjectured.
In some cases NP-hardness is known, but it requires a more complicated complicated structure of the corresponding deterministic problem (e.g.,~\cite{deshpande2016approximation,FuFXPWL17}); arguably, we lack techniques for establishing NP-hardness of SBFE problems that stems from the stochasticity of the problems.

\section{Preliminaries}

Throughout the paper, we use $\N=\{1,2,\dots\}$ and $\N_0=\{0,1,2,\dots\}$. For $m\in\mathbb{N}$, we use $[m]$ as a shorthand for $\{1,\dots,m\}$.

We call a function $f: \{0,1\}^n \rightarrow \{0,1\}$ a Boolean function in
$n$ variables. A \textit{state} $s$ is a vector $s \in
\{0,1,\bullet\}^n$. If $s$ is a state and $x \in \{0,1\}^n$, then we say that $x$ follows from
$s$ if $s_i \in \{x_i,\bullet\}$ for all $i \in [n]$. We say that $f$ is \textit{determined} by a state $s$ if $f$ takes the same value for all $x \in \{0,1\}$ following from $s$.

In the SBFE problem, we are given a Boolean function $f$ in $n$ variables, a cost vector $c \in \R^n$ and a probability vector $p \in (0,1)^n$. We assume w.l.o.g.\ that $p_1\leq p_2\leq \dots \leq p_n$. The values of the variables $x_1,x_2,\dots,x_n$ are initially unknown, and at each step, we must select a variable $x_i$ to test, upon which its value is revealed. We represent the current knowledge as a state vector $s \in \{0,1,\bullet\}^n$, where $s_i$ is the currently known value of $x_i$ or $\bullet$ if $x_i$ is unknown. We must continue testing variables exactly until $f$ is determined by the current state $s$.

Formally, we can represent a policy as function $\pi: \{0,1,\bullet\}^n \setminus \{0,1\}^n \rightarrow [n]$ where $x_{\pi(s)}$ is the variable tested when currently in state $s$, with the requirement that $s_{\pi(s)} = \bullet$ for all states $s \notin \{0,1\}^n$ (i.e., only variables with unknown value can be tested). If $\pi$ tests exactly the set of variables with indices $S \subseteq \{1,2,\dots,n\}$ (a random variable) before determining $f$, then the cost of $\pi$ on input $x$ with respect to cost vector $c$ (a random variable) is denoted $\cost_{c}(\pi,x)$ and formally defined defined as \[
    \cost_{c}(\pi,x) = \sum_{i \in S} c_i.
\]
The expected cost $\E_p[\cost_c(\pi)]$ of $\pi$ with respect to probability vector $p$ and cost vector $c$ is the expected value of $\cost_c(\pi)$ with respect to the probability distribution $\Pr$ given by \[
\mathrm{Pr}[x] = \prod_{i \in [n]: x_i = 1}p_i\prod_{i \in [n]: x_i = 0} (1-p_i).
\]
for all $x \in \{0,1\}^n$. We omit the subscripts $p$ and $c$ from $\E_p$ and, respectively, $\cost_c$ when they are clear from context.

A policy $\pi$ is \textit{non-adaptive} if $\pi(s)$ only depends on the number of unknown variables in state~$s$. We represent a non-adaptive policy simply as the fixed order (permutation) $\sigma$ of $[n]$ such that $x_{\sigma(i)}$ is the variable tested in the $i$-th step.

A \emph{partial} non-adaptive policy $\pi$ is a non-adaptive policy that stops early. It can be represented by a permutation $\sigma$ of a subset $\set(\sigma)$ of $[n]$ where again $x_{\sigma(i)}$ is the variable tested in the $i$-th step. 
If $\pi$ determines the value of $f$, its cost is defined in the same way as for non-partial policies. If $\pi$ does not determine the value of $f$, its cost is $n$.

Given two partial non-adaptive policies $\pi_1,\pi_2$, we denote by $\pi_1\circ\pi_2$ the (possibly partial) non-adaptive policy that first tests in order of $\pi_1$ and then in order of $\pi_2$, skipping any test that has been conducted by $\pi_1$ already.

The \textit{optimal policy} $\opt(f,c,p)$ (optimal non-adaptive policy
$\opt_{\NA}(f,c,p)$) of a function $f$ with respect to cost vector $c$ and
probability vector $p$ is the policy (non-adaptive policy) $\pi$ for $f$
which minimizes $\E_p[\cost_{c}(\pi)]$.

We define an instance of the SBFE problem as a triple $I = (f,p,c)$ where $f$ is a Boolean function, $p$ if a probability vector, and $c$ is a cost vector. For the purposes of determining the encoding length of an instance in the case of $k$-of-$n$ functions, $f$ is simply given by the integer $k$. A symmetric function is given through thresholds $0 = t_1 < t_2 <
\dots < t_T < t_{T+1} = n+1$ such that the function values changes at the thresholds, i.e., $f(x) \neq f(y)$ if $t_j \leq
\sum_{i=1}^n x_i < t_{j+1}$ and $t_{j+1} \leq
\sum_{i=1}^n y_i < t_{j+2}$, for $j \in [T-1]$.

Finally, let $\mathcal{I}$ be a class of instances $(f,p,c)$ of the SBFE problem. (In this paper, we will only consider the case where $f$ is a symmetric function.) The \textit{adaptivity gap} is defined as \[
    \sup_{(f,p,c) \in \mathcal{I}}
    \frac{\E_p[\cost_c(\opt_{\NA}(f,c,p))]}{\E_p[\cost_{c}(\opt(f,c,p))]}.
\]

\section{Tight Lower Bound on the Adaptivity Gap}
\label{sec:gap}

In this section, we show \Cref{thm:gapmain}, which we restate here for convenience.

\gapmain*

Recall that an upper bound of $2$ is known~\cite{gkenosis2018stochastic}, so we will proceed by proving a tight lower bound.
Our family of lower-bound instances is defined as follows. For positive integers $m,t$ and $\eps \in (0,1)$ (we only ever pick
$\eps$ close to $0$), define $L_{m,t,\eps} = (f,c,p)$ where
$f$ is the $k$-of-$n$ function with $k=m+t$ and $n=2m+2t$ and
\[
    (c_i,p_i) = \begin{cases}
        (1,\eps) & 1 \leq i \leq t\\    
        (0,1/2) & t < i \leq 2m+t \\
        (1,1-\eps) & 2m+t < i \leq 2m+2t
    \end{cases}
\]
for $i \in [n]$. 
To show the theorem, we will show that
\[
\lim_{t \rightarrow \infty}\lim_{m \rightarrow \infty}\lim_{\eps \rightarrow 0} \frac{\E[\cost(\OPT_{\NA}(L_{m,t,\eps}))]}{\E[\cost(\OPT(L_{m,t,\eps}))]} = 2.
\]
Thus, one should think of $\eps$ as being vanishingly small, of $t$ as being large, and of $m$ as being much larger than $t$.

We refer to the $2m$ variables of $L_{m,t,\eps}$ with $c_i = 0$ as \textit{free} variables and the remaining $2t$ variables as \textit{paid} variables. Among the paid variables, we refer to those with $p_i = 1-\eps$ as $1$-variables and those with $p_i = \eps$ as $0$-variables.\footnote{Note that, for an alternative proof, we could assume that the $i$-variables take value $i$ with probability $1$, for $i\in\{0,1\}$, and then use a continuity argument (or allow probabilities $0$ or $1$ in our model anyway).} We denote by $X$ the random variable such that $X(x) = |\{t < i \leq 2m+t \mid x_i = 1\}|$, i.e., $X$ is the number of $1$s among the free variables.

We assume that an optimal policy for $L_{m,t,\eps}$ tests all free variables before testing any other variable. We call such a policy \textit{economical}. This assumption is clearly without loss of generality since, if some policy is not economical, we can make it economical by moving all free variables to the front without increasing the cost of the policy for any $x \in \{0,1\}^n$.

To get an intuition, one can think of the $1$-variables and $0$-variables as always taking value $0$ and $1$ respectively, which is true in the limit $\eps \rightarrow 0$. Nevertheless, the policies may need to test them and,
in particular, pay their cost, until the function is determined.
This is where an adaptive policy has an advantage over a non-adaptive policy: after the free variables have been tested, an adaptive policy
can behave optimally, since it already knows the function value. The
non-adaptive policy, on the other hand, cannot use the outcome of the
free tests and therefore has to hedge against both possible function values, which leads to the ratio of $2$ as we will see in the remainder.

Since $m$ is much larger than $t$, the event that the function value is not
determined after performing the free tests (formally, $X-m\in[-t,t-1]$) has
low probability. Nevertheless, we observe in the following lemma that this
is the only event in which the behavior of an economical policy matters.

\begin{lemma}
\label{lem:ratio}
    Let $m$ and $t$ be any two integers with $m > t$, let $\eps \in (0,1)$ be arbitrary, 
    and let $X$ be as above. 
    For any pair of economical policies $\pi$
    and $\pi'$ we have \[
        \frac{\E[\cost(\pi)]}{\E[\cost(\pi')]} =
        \frac{\E[\cost(\pi) \mid X-m \in
            [-t,t-1]]}{\E[\cost(\pi') \mid
            X-m \in [-t,t-1]]}.
    \]
\end{lemma}

\begin{proof}
    Recall that there are $n = 2m+2t$ variables in the instance $L_{m,t,\eps}$ and the function value is determined when we find at least $k = m+t$ $1$s or at least $n-k+1$ $0$s.
    Let $F$ be the event that $X-m \in [-t,t-1]$.
    By the law of total expectation
    \begin{equation}\label{eq:cost}
        \frac{\E[\cost(\pi)]}{\E[\cost(\pi')]} =
        \frac{\E[\cost(\pi) \mid
            \overline{F}]\Pr[\overline{F}]+\E[\cost(\pi)
            \mid F]\Pr[F]}{\E[\cost(\pi') \mid
            \overline{F}]\Pr[\overline{F}]+\E[\cost(\pi')\mid F]\Pr[F]}.
    \end{equation}
    Suppose $\overline{F}$ occurs. Then either $X \geq m+t$ or $X \leq m-t-1$. In the first case, the number of $1$s in $x$ is at least $X \geq m+t=k$. In the latter case, the number of $0$s in $x$ is at least $2m-X \geq 2m-(m-t-1) = (2m+t)-m+1 = n-m-t+1 = n-k+1$. In both cases, the function value is determined by any economical policy at cost $0$. Thus, $\E[\cost(\pi)
    \mid \overline{F}] = \E[\cost(\pi') \mid \overline{F}] = 0$ and
    therefore \Cref{eq:cost} simplifies to
    \[
        \frac{\E[\cost(\pi)]}{\E[\cost(\pi')]} =
        \frac{\E[\cost(\pi) \mid F]}{\E[\cost(\pi') \mid
            F]},
    \]
    as desired.
\end{proof}

Next, we show a technical lemma that states that, in the limit case, $X-m$ takes any integer value in the interval $[-t,t-1]$ with the same probability. Intuitively, this follows from Lipschitzness of the normal distribution. 

\begin{lemma}
    \label{lem:binomial}
    For any $a\in\mathbb{N}$, let $X_{2a}$ be a random variable drawn from a binomial distribution with parameters $2a$ (number of trials) and $1/2$ (success probability). Then, for any $c\in\mathbb{N}$ and integer $i \in [-c,c-1]$,  we have
    \[
        \lim_{a \rightarrow \infty} \Pr[X_{2a}-a = i \mid
        X_{2a}-a \in [-c,c-1]] = \frac{1}{2c}.
    \]
\end{lemma}

\begin{proof}
    Fix an integer $c$ and $i \in [-c,c-1]$. Now, for any positive integer
    $a \geq c$
    \begin{align}\label{eq:recip} 
        \nonumber\Pr[X_{2a}-a = i \mid X_{2a}-a \in [-c,c-1]] &= \frac{\Pr[X_{2a}-a
        = i]}{\Pr[X_{2a}-a \in [-c,c-1]]} \\
                     &= \frac{\binom{2a}{a+i}}{\sum_{j=-c}^{c-1} \binom{2a}{a+j}}.
    \end{align}
    We show that the reciprocal of \Cref{eq:recip} converges to $2c$
    as $a \rightarrow \infty$ and the lemma follows by the quotient law for limits. We see that
    \begin{equation}
        \label{eq:sum}
        \frac{\sum_{j=-c}^{c-1} \binom{2a}{a+j}}{\binom{2a}{a+i}} =
        \sum_{j=-c}^{c-1} \frac{\binom{2a}{a+j}}{\binom{2a}{a+i}} =
        \sum_{j=-c}^{c-1} \frac{(a+i)!(a-i)!}{(a+j)!(a-j)!}.
    \end{equation}

    Now, we show that every term in \Cref{eq:sum} converges to $1$ as $a
    \rightarrow \infty$ and the lemma follows from the sum law for limits.
    Consider a fixed integer $j \in [-c,c-1]$ and the expression
    \begin{equation}
        \label{eq:frac}
        \frac{(a+i)!(a-i)!}{(a+j)!(a-j)!}. 
    \end{equation}
    Both the numerator and denominator contain exactly $2a$ terms and $(a-c)!$ occurs twice as a factor in both. Canceling out the factors of $(a-c)!$, we are left with $2c$ factors in both the numerator and denominator. Pairing these factors arbitrarily, we are left with the product of $2$ fractions of the form $(a+c_1)/(a+c_2)$ where $c_1,c_2 \in
    [-c,c-1]$, all of which converge to $1$ as $a \rightarrow \infty$. Thus, by the product law for limits, \Cref{eq:frac} converges to $1$ as $a \rightarrow \infty$ for any fixed $j \in [-c,c-1]$. This completes the
    proof of the lemma.
\end{proof}

To prove \Cref{thm:gapmain}, by the previous two lemmata, we may focus on the setting where $X-m$ is uniformly distributed in $[-t,t-1]$. We analyze the performance of an adaptive policy and that of a non-adaptive policy.

\begin{proof}[Proof of \Cref{thm:gapmain}]
    Let $m,t \in \N$ and $\eps \in (0,1)$ be arbitrary and let $(f,c,p) = L_{m,t,\eps}$. Let $X$ be as above. 
    Since $p_i = 1/2$ for the $2m$ variables $x_i$ with $t < i < 2m+t$ contributing to $X$ and $t$ is a constant independent of $m$, \Cref{lem:binomial} implies \begin{equation}
        \label{eq:xinrange}
    \lim_{m \rightarrow \infty} \E[X-m = i \mid X-m \in [-t,t-1]] = \frac{1}{2t}
    \end{equation}
    for any integer $i \in [-t,t-1]$.

    Now, let $\pi$ be the adaptive policy which on input $x$ evaluates the free variables and then, if it has not yet determined $f$, evaluates all $1$-variables first if $X(x)-m\geq 0$ and otherwise evaluates all $0$-variables first. Also, let $\pi_{\NA}$ be any economical non-adaptive policy. As $\pi_{\NA}$ is non-adaptive and economical, the last $2t$ tests of $\pi_{\NA}$ are the paid variables of $L_{m,t,\eps}$ in some fixed order $\sigma(1),\sigma(2),\dots,\sigma(2t)$.  
    For $j \in [t]$, let $n_{1}(j)$ be minimal such that there are exactly $j$ $1$-variables  in $\{x_{\sigma(1)},x_{\sigma(2)},\dots,x_{\sigma(n_1(j))}\}$ and let $n_{0}(j)$ be minimal such that there are exactly $j$ $0$-variables $\{x_{\sigma(1)},x_{\sigma(2)},\dots,x_{\sigma(n_0(j))}\}$).
    
    \begin{claim}
        \label{claim:nonadaptivestop}
        Let $x \in \{0,1\}$ be such that $X(x)-m \in [-t,t-1]$ and $x_i = 1$ for all $1$-variables $x_i$ and $x_j = 0$ for all $0$-variables $x_j$. Then \begin{align*}
        \cost(\pi,x) &= \begin{cases}
            k-X(x) & \text{if } X(x)-m \geq 0 \\
            (n-k+1)-(2m-X(x)) & \text{if } X(x)-m < 0
        \end{cases}\ \text{and} \\
        \cost(\pi_{\NA},x) &= \begin{cases}
            n_1(k-X(x)) & \text{if } X(x)-m \geq 0 \\
            n_0((n-k+1)-(2m-X(x))) & \text{if } X(x)-m < 0.
        \end{cases}
    \end{align*}
    \end{claim}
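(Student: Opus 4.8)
The plan is to prove both equalities by a direct case analysis on the sign of $X(x)-m$, after one preliminary observation about when $f$ gets determined. First I would fix such an $x$ and count: since the $t$ $1$-variables take value $1$, the $t$ $0$-variables take value $0$, and exactly $X(x)$ of the $2m$ free variables take value $1$, the input $x$ has $X(x)+t$ ones and $2m+t-X(x)$ zeros. As $k=m+t$, this gives $f(x)=1$ iff $X(x)-m\ge 0$. Both $\pi$ and $\pi_{\NA}$ are economical, so each first performs the $2m$ cost-$0$ free tests; afterwards exactly $X(x)$ ones are known and the $2t$ paid variables are untested. Because $X(x)-m\in[-t,t-1]$ we have $X(x)<k$ and $X(x)+2t\ge k$, so $f$ is not yet determined, and hence in both policies the cost equals the number of subsequently performed (paid) tests.

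In the case $X(x)-m\ge 0$ (so $f(x)=1$): a certificate for $f=1$ is a set of $k$ indices with value $1$; $X(x)$ of these come from the free tests, and among the paid variables the only ones with value $1$ are the $t$ $1$-variables, so exactly $k-X(x)$ further $1$-variables must be tested, and $k-X(x)=m+t-X(x)\in[1,t]$ by the range of $X(x)$, so this quantity is well defined. Moreover $f=0$ is never certified during this process, because the only zeros that get revealed are the $2m-X(x)$ free zeros and at most $t$ more (the $0$-variables), and $2m-X(x)+t<m+t+1=n-k+1$ precisely because $X(x)\ge m$. Hence $\pi$, which next tests $1$-variables (all of value $1$), stops after exactly $k-X(x)$ paid tests, so $\cost(\pi,x)=k-X(x)$; and $\pi_{\NA}$, testing paid variables in the fixed order $\sigma$, stops as soon as it has tested $k-X(x)$ $1$-variables, which by definition of $n_1$ happens after $n_1(k-X(x))$ paid tests, so $\cost(\pi_{\NA},x)=n_1(k-X(x))$.

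The case $X(x)-m<0$ (so $f(x)=0$) is symmetric: a certificate for $f=0$ needs $n-k+1=m+t+1$ indices of value $0$, of which $2m-X(x)$ come from the free tests, so $(n-k+1)-(2m-X(x))=t+1-m+X(x)\in[1,t]$ further $0$-variables must be tested; meanwhile $f=1$ cannot be certified because the revealed ones number at most $X(x)+t<m+t=k$. This yields $\cost(\pi,x)=(n-k+1)-(2m-X(x))$ and $\cost(\pi_{\NA},x)=n_0((n-k+1)-(2m-X(x)))$, completing the case analysis.

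I expect the only non-routine point to be verifying that $f$ is never certified ``the wrong way'' before the claimed stopping time, i.e., that accumulating zeros cannot force $f=0$ while one is collecting ones, and vice versa. This is exactly where the precise choice of parameters ($k=m+t$, $n=2m+2t$) together with the hypothesis $X(x)-m\in[-t,t-1]$ is needed: the relevant inequalities $2m-X(x)+t<n-k+1$ and $X(x)+t<k$ are tight but do hold on this range. Everything else is bookkeeping.
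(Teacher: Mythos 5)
Your proof is correct and follows essentially the same case analysis as the paper's: count the ones/zeros, observe $f(x)=1$ iff $X(x)\ge m$, and conclude that an economical policy stops exactly upon testing the $(k-X(x))$-th $1$-variable (resp.\ the $((n-k+1)-(2m-X(x)))$-th $0$-variable). The only difference is that you explicitly verify that $f$ cannot be certified ``the wrong way'' before the claimed stopping time, a point the paper leaves implicit in the phrase ``exactly when''; this is a welcome extra detail, not a divergence.
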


    \claimproof{\Cref{claim:nonadaptivestop}}{        
        Suppose $X-m \geq 0$. Since all $t$ $1$-variables evaluate to $1$, there are $X(x)+t \geq m+t = k$ $1$s in $x$ and so $f(x) = 1$.
        Thus, any economical policy has determined the value of $f$ exactly when it has evaluated the ($k-X(x)$)-th $1$-variable. Hence, by the definition of $\pi$, $\cost(\pi,x) = k-X(x)$, and by the definition of $n_1$, $\cost(\pi_{\NA},x) = n_1(k-X(x))$.

        Suppose that $X(x)-m < 0$. Since all $t$ $0$-variables evaluate to $0$, there are $2m-X(x)+t \geq n-k+1$ $0$s in $x$ and so $f(x) = 0$. Thus, any economical policy has determined the value of $f$ exactly when it has evaluated the $((n-k+1)-(2m-X(x)))$-th $0$-variable. Hence, by the definition of $\pi$, $\cost(\pi,x) = (n-k+1)-(2m-X(x))$, and by the definition of $n_0$, $\cost(\pi_{\NA},x) = n_0((n-k+1)-(2m-X(x)))$.
    }
    
    \smallskip

    Note that the condition on $x$ from \Cref{claim:nonadaptivestop} (all $i$-variable take value $i$ for $i=0,1$) holds with probability approaching $1$ as $\eps$ approaches $0$. Using this observation and \Cref{claim:nonadaptivestop}, we will show that \[
        \lim_{m \rightarrow \infty}\lim_{\eps \rightarrow 0} \frac{\E[\cost(\pi_{\NA})]}{\E[\cost(\pi)]} = \frac{2t+1}{t+1},
    \]
    from which the theorem follows by taking the limit as $t$ approaches infinity.
    First, by \Cref{lem:ratio}, and since both expectations are between $0$ and $2t$, the above ratio is equal to 
    \[
         \frac{\lim_{m \rightarrow \infty}\lim_{\eps \rightarrow 0}\E[\cost(\pi_{\NA}) \mid X-m \in [-t,t-1]]}{\lim_{m \rightarrow \infty}\lim_{\eps \rightarrow 0}\E[\cost(\pi) \mid X-m \in [-t,t-1]]}.
    \]

    We first analyze the cost of $\pi$:
    \begin{align*}
        \lim_{m \rightarrow \infty}\lim_{\eps \rightarrow 0}&\E[\cost(\pi) \mid X \in [-t,t-1]] \\
        &= \lim_{m\rightarrow \infty}\lim_{\eps \rightarrow 0}\sum_{i=-t}^{t-1} \mathrm{Pr}[X-m = i \mid X-m \in [-t,t-1]]\cdot \E[\cost(\pi) \mid X-m = i] \\
        &= \sum_{i=-t}^{t-1} \lim_{m\rightarrow \infty}\mathrm{Pr}[X-m = i \mid X-m \in [-t,t-1]]\cdot \lim_{\eps \rightarrow 0}\E[\cost(\pi) \mid X-m = i] \\
        &= \frac{1}{2t}\left(\sum_{i=-t}^{-1} ((n-k+1)-(2m-(i+m))) + \sum_{i=0}^{t-1} (k-(i+m)) \right) \\
        &= \frac{1}{2t}\left(\sum_{i=-t}^{-1} (t+i+1) + \sum_{i=0}^{t-1} (t-i) \right) \\
        &= \frac{1}{2t}\left(\sum_{i=1}^{t} i + \sum_{i=1}^{t} i \right)
        = \frac{1}{t}\sum_{i=1}^{t} i
        = \frac{t(t+1)}{2t}
        = \frac{t+1}{2},
    \end{align*}
    where the second equality is a consequence of the sum and product laws for limits, the third follows from \Cref{eq:xinrange} and \Cref{claim:nonadaptivestop}.

    Using an argument identical to the case of $\pi$ in the first step, we see that for $\pi_{\NA}$,
    \begin{align*}
    \lim_{m \rightarrow \infty}\lim_{\eps \rightarrow 0}&\E[\cost(\pi_{\NA}) \mid X \in [-t,t-1]] \\
        &= \frac{1}{2t}\left(\sum_{i=-t}^{-1} n_{0}((n-k+1)-(2m-(i+m)) + \sum_{i=0}^{t-1} n_{1}(k-(i+m)) \right) \\
        &= \frac{1}{2t}\left(\sum_{i=-t}^{-1} n_{0}(t+i+1) + \sum_{i=0}^{t-1} n_{1}(t-i) \right) \\
        &= \frac{1}{2t}\left(\sum_{i=1}^{t} n_{0}(i) + \sum_{i=1}^{t} n_{1}(i) \right)
        = \frac{1}{2t}\sum_{i=1}^{2t} i 
        = \frac{2t(2t+1)}{2\cdot 2t}
        = \frac{2t+1}{2},
    \end{align*}
    where, in the fourth equality, we used that \[
        \{n_1(i) \mid i \in [t]\} \cup \{n_0(i) \mid i \in [t]\} = [2t]
    \]
    since $\sigma(j)$ is either a $1$-variable or a $0$-variable for all $j \in [2t]$.

    Thus, we have \[
    \frac{\lim_{m \rightarrow \infty}\lim_{\eps \rightarrow 0}\E[\cost(\pi_{\NA}) \mid X-m \in [-t,t-1]]}{\lim_{m \rightarrow \infty}\lim_{\eps \rightarrow 0}\E[\cost(\pi) \mid X-m \in [-t,t-1]]} = \frac{2t+1}{t+1},
    \]
    which is what we wanted to show.
\end{proof}

\section{PTAS for Symmetric Functions in the Unit-Cost Case}
\label{sec:ptas}

In this section, we show \Cref{thm:ptasmain}, which we restate here for convenience.

\ptasmain*

In~\Cref{subsec:reduction}, we first reduce the task to get a PTAS to computing solution with certain properties for a ``bounded'' variant of the problem. Then, in \Cref{subsec:bounded}, we solve this variant of the problem.

Throughout the section we let $\pi^\star:=\opt_{\NA}(f,c,p)$ the optimal non-adaptive policy. We also assume without loss of generality that $1/\eps\in \N$ and use $O_\eps(\cdot)$ to suppress dependencies on $\eps$ in $O(\cdot)$ notation.

\subsection{Reduction to the Bounded Variant}
\label{subsec:reduction}

In this subsection we will prove that the following lemma suffices to obtain a PTAS for our problem. The proof of the lemma is then given in the subsection after.
\begin{lemma}\label{lem:pq}
	Given $\eps > 0$ and $a, a'\in [n]$ with $a < a'$, there is an algorithm that enumerates non-adaptive partial policies $\pi_1,\pi_2,\dotsc$, each stopping after $a'$ tests, in time $n^{O_\eps(a'/a)}$ among which there is some $\pi_j$ with
	\begin{equation*}
		\Pr[\cost(\pi_j) \ge i]\le \Pr[(1 + \eps) \cost(\pi^\star) \ge i]
	\end{equation*}
	for each $i\in\{a,a+1,\dotsc,a'\}$.
\end{lemma}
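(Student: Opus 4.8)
The plan is to isolate, by guessing a sparse ``fingerprint'' of $\pi^\star$ inside geometrically growing buckets of tests, a purely combinatorial \emph{two-sided domination} property between the multisets of probabilities of the variables tested by $\pi_j$ and by $\pi^\star$ on matching prefixes; the tail inequality then follows by a coupling argument using the certificate structure of $k$-of-$n$ functions. Concretely, after rescaling $\eps$ by a suitable constant, fix a partition of the slots $\{1,\dots,a'\}$ into buckets $B_1,\dots,B_L$ with $B_1=\{1,\dots,a\}$ and with cumulative sizes $|B_1|+\dots+|B_\ell|$ growing by a factor in $[1,1+\eps]$ (truncated at $a'$), so that $L=O_\eps(\log(a'/a))\subseteq O_\eps(a'/a)$. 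Mirror this with a partition of $\pi^\star$'s first $a'':=\lfloor a'/(1+\eps)\rfloor$ slots into $B_1^\star,\dots,B_L^\star$ with $|B_1^\star|+\dots+|B_\ell^\star|\le(|B_1|+\dots+|B_\ell|)/(1+\eps)$ for all $\ell$; write $B_\ell^\star$ also for the (unknown) set of variables $\pi^\star$ tests in those slots and $N_\ell:=|B_\ell^\star|$. The algorithm guesses, for each $\ell$, the \emph{milestone set} $M_\ell\subseteq B_\ell^\star$ consisting of the variables of ranks $1,\lceil\eps N_\ell\rceil,\lceil 2\eps N_\ell\rceil,\dots,N_\ell$ when $B_\ell^\star$ is sorted by probability; since $|M_\ell|\le 1/\eps+2$, this amounts to $n^{O(L/\eps)}=n^{O_\eps(a'/a)}$ guesses, and for each we output the partial policy $\pi_j$ constructed next (completed arbitrarily if it runs short near the end, and using exact mimicking for the $O_\eps(1)$-many smallest buckets, where a milestone set already equals the whole bucket).

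\textbf{Reconstruction and the domination invariant.}
Given the guess equal to the true $(M_1,\dots,M_L)$, we build $\pi_j$ greedily: having fixed its first $\ell-1$ buckets as a set $P_{\ell-1}$, we let its $\ell$-th bucket $B_\ell^{(j)}$ (of size $|B_\ell|\ge(1+\eps)N_\ell$) consist of $M_\ell$ together with, for each probability interval delimited by consecutive milestones of $M_\ell$ (including the open ends), a prescribed number of not-yet-chosen variables lying in that interval. The numbers are chosen so that $P_\ell:=P_{\ell-1}\cup B_\ell^{(j)}$ contains, for every milestone $v$, at least as many variables of probability $\ge p(v)$ as $\pi^\star$'s first $N_{\le\ell}:=N_1+\dots+N_\ell$ tests do, and likewise at least as many of probability $\le p(v)$. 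The milestone ranks pin down each of these target counts up to an additive $\lceil\eps N_\ell\rceil$, and the two families of constraints together demand a set of size at most $(1+O(\eps))N_\ell$, which fits into $|B_\ell^{(j)}|\ge(1+\eps)N_\ell$. A straightforward induction on $\ell$ (base case $P_0=\emptyset$) then gives the \emph{two-sided domination} invariant: there are injections $\phi,\phi'$ from the set $P_\ell^\star$ of $\pi^\star$'s first $N_{\le\ell}$ tests into $P_\ell$ with $p(\phi(v))\ge p(v)$ and $p(\phi'(v))\le p(v)$ for all $v\in P_\ell^\star$ -- equivalently, by Hall's theorem on the monotone bipartite graphs, $|\{w\in P_\ell:p(w)\ge\tau\}|\ge|\{w\in P_\ell^\star:p(w)\ge\tau\}|$ for every threshold $\tau$, and the same with ``$\le\tau$''.

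\textbf{From domination to the tail bound.}
A set $S$ of tested variables determines a $k$-of-$n$ function exactly when it contains at least $k$ ones or at least $n-k+1$ zeros; these events are disjoint since $k+(n-k+1)>n\ge|S|$, so $\Pr[S\text{ determines }f]=\Pr[\#\mathrm{ones}(S)\ge k]+\Pr[\#\mathrm{zeros}(S)\ge n-k+1]$. Using $\phi$ and independence, $\#\mathrm{ones}(P_\ell)$ stochastically dominates $\sum_v\mathbf 1[x_{\phi(v)}=1]$, which stochastically dominates $\#\mathrm{ones}(P_\ell^\star)$ because $p(\phi(v))\ge p(v)$; symmetrically $\phi'$ yields $\#\mathrm{zeros}(P_\ell)\succeq\#\mathrm{zeros}(P_\ell^\star)$. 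Adding the two disjoint contributions, $\Pr[P_\ell\text{ determines }f]\ge\Pr[P_\ell^\star\text{ determines }f]$. Now, for $i\in\{a,\dots,a'\}$, let $\ell$ be the largest index with $|B_1|+\dots+|B_\ell|\le i-1$; then $\pi_j$'s first $i-1$ tests contain $P_\ell$, while the geometric growth together with $|B_{\le\ell}^\star|\le|B_{\le\ell}|/(1+\eps)$ and integrality of $N_{\le\ell}$ gives $N_{\le\ell}\ge\lceil i/(1+\eps)\rceil-1$ (after the rescaling of $\eps$; the low-end cases, where $\pi_j$ mimics $\pi^\star$ exactly, give the bound with no slack). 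Combining this with $\Pr[\cost(\pi_j)\ge i]=1-\Pr[\pi_j\text{'s first }i-1\text{ tests determine }f]$ and the analogous identity for $\pi^\star$ yields $\Pr[\cost(\pi_j)\ge i]\le\Pr[(1+\eps)\cost(\pi^\star)\ge i]$, as required.

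\textbf{Main obstacle.}
The crux -- and the step I expect to be the hardest -- is the bucket construction in the second paragraph: first, that $O(1/\eps)$ milestone ranks determine the whole probability profile of $B_\ell^\star$ up to additive $O(\eps N_\ell)$ error, and, more delicately, that while sweeping buckets from left to right one can always find the required not-yet-chosen variables in each probability interval to realize a profile that, together with $P_{\ell-1}$, two-sidedly dominates $\pi^\star$'s matching prefix while spending only a $(1+\eps)$ factor more tests than $\pi^\star$. I expect this availability argument to hinge on the fact that the needed variables are essentially those still available to $\pi^\star$ itself (its buckets $\ell,\ell+1,\dots$), combined with a counting argument over thresholds. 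By comparison, the geometric bucketing, the Hall/coupling step, and the integrality bookkeeping that turns per-prefix domination into the per-$i$ tail bound are all routine.
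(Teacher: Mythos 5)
Your overall architecture is the same as the paper's: guess $O(1/\eps)$ probability-rank milestones of $\pi^\star$ inside each of $O_\eps(a'/a)$ buckets, build from them sets that two-sidedly dominate every prefix of $\pi^\star$'s buckets while using only a $(1+O(\eps))$ factor more tests, convert dominance into the tail inequality by a coupling over the two disjoint certificate events, and absorb the bucket granularity into the $(1+\eps)$ stretch of $\cost(\pi^\star)$ (the paper uses buckets of fixed size $\Theta(\eps\,a)$ rather than geometrically growing ones, but that difference is cosmetic; your coupling step is exactly \Cref{lem:propstop}/\Cref{lem:dominance}, and your bookkeeping is \Cref{lem:ann}).

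However, the step you explicitly flag as the ``main obstacle'' is not a routine gap to be filled later --- it is the actual technical core of the proof of \Cref{lem:pq}, and your sketch of it does not work as stated. You propose to form bucket $\ell$ of $\pi_j$ as ``$M_\ell$ together with, for each probability interval delimited by consecutive milestones, a prescribed number of not-yet-chosen variables lying in that interval.'' Two problems arise. First, the milestones of $B^\star_\ell$, and indeed all variables in a given inter-milestone interval, may already have been consumed by your earlier buckets $P_{\ell-1}$ (your policy's earlier buckets need not coincide with $\pi^\star$'s, and may have spent variables that $\pi^\star$ tests only later), so the prescribed interval-wise quotas may simply be unrealizable; the hoped-for argument that ``the needed variables are essentially those still available to $\pi^\star$ itself'' is false at this granularity. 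Second, what must be proved is not interval-wise matching but the cumulative threshold inequalities for the \emph{union} $P_\ell$ versus $P_\ell^\star$, and making that go through with only $(1+O(\eps))|B^\star_\ell|$ new tests despite such collisions is precisely what the paper's \Cref{alg:alg} and \Cref{lem:alg} accomplish: a forward pass that greedily takes \emph{any} available variable, with a counter replenished at milestones (this alone yields left dominance), followed by a backward pass of $\lceil\eps|V^\star_i|\rceil$ extra picks for right dominance, and a counting argument (\Cref{claim:alg}) that, at any index $h$ not picked, combines the counter accounting with the induction hypothesis $\bigcup_{i'<i}V_{i'}\succeq\bigcup_{i'<i}V^\star_{i'}$ to charge unavailable elements to the previously built prefix. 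Without an argument of this kind your construction and its induction invariant are unproven, so the proposal, while on the right track, leaves the decisive lemma open.
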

We are particularly interested in the following consequence of the lemma.
\begin{corollary}\label{cor:pq}
	Given $\eps > 0$ and $a, a'\in [n]$ with $a < a'$, there is an algorithm that finds in time $n^{O_\eps(a'/a)}$ a non-adaptive partial policy $\pi$ stopping after $a'$ tests and satisfying
	\begin{multline*}
		\sum_{i = a}^{a' - 1} \Pr[\cost(\pi) \ge i]
		+ a' \cdot \Pr[\cost(\pi) \ge a'] \\
		\le \sum_{i = a}^{a'-1} \Pr[(1 + \eps) \cost(\pi^\star) \ge i]
		+ a' \cdot \Pr[(1 + \eps) \cost(\pi^\star) \ge a'].
	\end{multline*}
\end{corollary}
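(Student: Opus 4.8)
The plan is to derive Corollary~\ref{cor:pq} directly from Lemma~\ref{lem:pq} by choosing the policy $\pi$ to be the policy $\pi_j$ whose existence is guaranteed by the lemma. Concretely, I would invoke Lemma~\ref{lem:pq} with the same parameters $\eps, a, a'$ to obtain the enumerated list $\pi_1, \pi_2, \dotsc$ in time $n^{O_\eps(a'/a)}$, and pick the $\pi_j$ satisfying $\Pr[\cost(\pi_j) \ge i] \le \Pr[(1+\eps)\cost(\pi^\star) \ge i]$ for all $i \in \{a, a+1, \dotsc, a'\}$. (We can simply output every $\pi_j$ in the enumeration; since the right-hand side of the corollary's inequality does not depend on the output policy, it suffices to know that at least one enumerated policy works, but to make the statement self-contained one would note that the desired $\pi_j$ can be identified by evaluating the relevant probabilities, which is possible in polynomial time for a fixed non-adaptive partial policy.)

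Next I would observe that the left-hand side of the corollary's inequality is a nonnegative linear combination of the quantities $\Pr[\cost(\pi) \ge i]$: each term $\Pr[\cost(\pi) \ge i]$ for $i \in \{a, \dotsc, a'-1\}$ appears with coefficient $1$, and the term $\Pr[\cost(\pi) \ge a']$ appears with coefficient $a'$, which is positive. Applying the pointwise bound from Lemma~\ref{lem:pq} term by term — $\Pr[\cost(\pi_j) \ge i] \le \Pr[(1+\eps)\cost(\pi^\star) \ge i]$ for every index $i$ in the range, including $i = a'$ — and using that all coefficients are nonnegative, the entire left-hand side is bounded above by the corresponding combination of $\Pr[(1+\eps)\cost(\pi^\star) \ge i]$ terms, which is exactly the right-hand side of the corollary. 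This completes the argument, and the running time is inherited verbatim from Lemma~\ref{lem:pq}.

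There is essentially no obstacle here: the corollary is a mechanical consequence of the lemma, obtained by taking a nonnegative linear combination of the guaranteed pointwise inequalities. The only mild subtlety worth a sentence is why the coefficient $a'$ on the last term is legitimate — it is, because $a' \in [n]$ so $a' \ge 1 > 0$ — and why the index $i = a'$ is covered by Lemma~\ref{lem:pq}, which it is, since the lemma's range is the closed interval $\{a, a+1, \dotsc, a'\}$. The motivation (not needed for the proof but worth flagging) is that the left-hand side is a lower bound, up to the truncation at $a'$, on $\E[\cost(\pi)]$ via the layer-cake identity $\E[\cost(\pi)] = \sum_{i \ge 1}\Pr[\cost(\pi) \ge i]$, so this corollary is the form in which Lemma~\ref{lem:pq} will be applied when assembling the final expected-cost bound in the reduction.
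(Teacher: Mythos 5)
Your overall route is the same as the paper's: \Cref{cor:pq} is deduced from \Cref{lem:pq} by summing the pointwise inequalities with nonnegative coefficients ($1$ for each $i\in\{a,\dots,a'-1\}$ and $a'$ for $i=a'$), and that part of your argument is correct. The genuine issue is in how your algorithm selects which enumerated policy to output. You propose to ``pick the $\pi_j$ satisfying the pointwise inequalities'' and assert that ``the desired $\pi_j$ can be identified by evaluating the relevant probabilities.'' But those pointwise conditions compare against $\Pr[(1+\eps)\cost(\pi^\star)\ge i]$, and $\pi^\star$ is precisely what the algorithm does not know; for a given candidate there is no way to verify that it is the policy promised by the lemma. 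Similarly, ``output every $\pi_j$'' does not meet the corollary's requirement of returning a single policy.

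The paper closes this gap with a small but necessary observation: the left-hand side of the corollary's inequality depends only on the candidate policy and can be computed exactly in polynomial time (via dynamic programming over the probability of having seen exactly $k'$ ones after the $i$th test), while the right-hand side is a fixed quantity independent of the candidate. Hence one returns the enumerated policy that \emph{minimizes} the left-hand side; since the $\pi_j$ guaranteed by \Cref{lem:pq} is among the candidates and its left-hand side is at most the right-hand side (by exactly your linear-combination argument), the minimizer satisfies the inequality as well, even though it need not coincide with $\pi_j$ and the algorithm never certifies any pointwise bound. With this selection rule substituted for your identification step, your proof coincides with the paper's.
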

\begin{proof}
	Towards this, apply \Cref{lem:pq} and return the policy that minimizes the left-hand side.
	Note that the left-hand side for each solution can be computed in polynomial time using dynamic programming, where the table we compute contains for every $i=1,2,\dotsc,n$ and every $k'=1,2,\dotsc,i$ the probability of having exactly $k'$ 1s after the $i$th test. An algorithm for computing the entries in this table was given in~\cite[Section 7]{grammel2022}, and we give an algorithm computing the expected cost of a policy for an arbitrary symmetric Boolean function in \Cref{sec:computing}.
	Since $\pi_j$ (as in \Cref{lem:pq}) satisfies the inequality, so does the minimizer.
\end{proof}

Using this we prove the main result.
\begin{proof}[Proof of \Cref{thm:ptasmain} assuming \Cref{lem:pq}]
        Note that
        \begin{equation*}
            \E[\cost(\pi^\star)] \ge \sum_{i=1}^n \Pr\bigg[(1+\eps)\cost(\pi^\star) \in [i, i+1)\bigg] \cdot \frac{i}{1 + \eps} .
        \end{equation*}
        Thus,
        \begin{equation*}
            (1 + \eps) \E[\cost(\pi^\star)] \ge \sum_{i=1}^n \Pr\bigg[(1+\eps)\cost(\pi^\star) \in [i, i+1)\bigg] \cdot i
            = \sum_{i = 1}^n \Pr[(1 + \eps)\cost(\pi^\star) \ge i].
        \end{equation*}
	Let $a_j(i) = 2^{1/\eps\cdot j + i}$ for each $i=0,\dotsc,1/\eps-1$ and $j\in \N_0$.
	Observe that $$\{a_j(0) \mid j\in \N_0\},\dots,\{a_j(1/\eps-1) \mid j\in \N_0\}$$ forms a partition of $\{2^i \mid i\in \N_0\}$. We will show that there exists a partition class that ``contributes'' only little to the cost of the optimum. Towards this, observe
	\begin{multline*}
		\sum_{i=0}^{1/\eps - 1}\sum_{j\in \N_0} a_j(i) \cdot \Pr[(1 + \eps)\cost(\pi^\star) \ge a_j(i)]
		= \sum_{i\in \N_0} 2^i \cdot \Pr[(1 + \eps)\cost(\pi^\star) \ge 2^i] \\
		\le 2 \sum_{i = 1}^n \Pr[(1 + \eps)\cost(\pi^\star) \ge i]
		= 2(1 + \eps) \E[\cost(\pi^\star)] .
	\end{multline*}
	Thus, there exists some $\ell$ that contributes only a small fraction to the left-hand side, i.e.,
	\begin{equation}
        \label{eq:random-shift}
		\sum_{j\in \N_0} a_j(\ell) \cdot \Pr[\cost(\pi^\star) \ge a_j(\ell)]
		\le 2\eps(1 + \eps) \cdot \E[\cost(\pi^\star)] .
	\end{equation}
    In the following we assume that $\ell$ is known to the algorithm. Formally, the algorithm runs for every possible choice of $\ell$, computes the expected cost of the resulting policy (again using dynamic programming~\cite[Section 7]{grammel2022}) and outputs the one with the lowest expected cost. 
    
	For sake of brevity, write $a_j$ instead of $a_j(\ell)$. Further,
	define $a_0 = 1$, and let $h$ be minimum such that $a_{h
    +1}\geq n$.
    For $j\in\{0,\dots,h\}$, let $\pi_j$ be the partial policy generated by applying~\Cref{cor:pq} with $a=a_j$ and $a'=a_{j+1}$.
    We define the final policy $\pi$ as $\pi_0\circ \pi_1\circ\dots\circ \pi_h$.

    For some $j\in\{1,\dotsc,h\}$ we consider how many tests from $\pi_j$ are actually performed in $\pi$. If $\cost(\pi_{j-1}) < a_j$, then none of the tests of $\pi_j$ are performed (except for duplicates appearing in $\pi_0,\dotsc,\pi_{j-1}$), since, by the end of $\pi_{j-1}$, $\pi$  has already determined the function value.
    Otherwise, we may or may not perform tests from $\pi_j$ (depending on $\pi_0,\dots,\pi_{j-2}$), but never more than
    \begin{equation*}
        a_j + \sum_{i=a_j + 1}^{a_{j+1} - 1} \mathbf{1}_{\cost(\pi_j) \ge i} .
    \end{equation*}
	By linearity of expectation, it follows that 
	\begin{align*}
		\E[\cost(\pi)] &\le 
		\sum_{i=1}^{a_{1} - 1} \Pr[\cost(\pi_0) \ge i] \\ 
        &+ \sum_{j=1}^h \left[a_{j} \cdot \Pr[\cost(\pi_{j-1}) \ge a_{j}] + \sum_{i=a_j+1}^{a_{j+1} - 1} \Pr[\cost(\pi_j) \ge i] \right] \\
		&\le \sum_{j=0}^h \left[\sum_{i=a_j}^{a_{j+1} - 1} \Pr[\cost(\pi_j) \ge i] + a_{j+1} \cdot \Pr[\cost(\pi_j) \ge a_{j+1}] \right] \\
		&\le \sum_{j=0}^h \left[\sum_{i=a_j}^{a_{j+1} - 1} \Pr[(1 + \eps)\cost(\pi^\star) \ge i] + a_{j+1} \cdot \Pr[(1 + \eps)\cost(\pi^\star) \ge a_{j+1}]  \right] \\
		&\le (1 + \eps) \E[\cost(\pi^\star)] + \sum_{j=1}^h a_{j} \cdot \Pr[(1 + \eps)\cost(\pi^\star) \ge a_{j}] \\
		&\le (1 + \eps)\E[\cost(\pi^\star)] + 2(1 + \eps)\eps \cdot \E[\cost(\pi^\star)] \le (1 + 4\eps) \cdot\E[\cost(\pi^\star)],
	\end{align*}
    where we use the property guaranteed by~\Cref{cor:pq} in the third step and \Cref{eq:random-shift} in the fourth step.
        Since $a_{j+1} / a_j \le 2^{1/\eps}$ for all $j$, the running time is bounded by $n^{2^{O(1/\eps)}}$.
	Scaling $\eps$ with a factor of $1/4$ reduces the approximation ratio to $1 + \eps$ while preserving the running time above.
\end{proof}

\subsection{Algorithm for the Bounded Variant}
\label{subsec:bounded}

The goal of this subsection is to show \Cref{lem:pq}, which will complete the proof of \Cref{thm:ptasmain}. Our algorithm will pick tests so as to \emph{dominate} certain parts of the optimal solution. The notion of dominance is the following.

Let $V,V^\star \subseteq [n]$ with $|V| \geq |V^\star|$. For $h\in\mathbb{N}$, denote by $[-h] = \{n,n-1,\dots,n-h+1\}$. We say that $V$
\textit{dominates} $V^\star$ (written $V \succeq V^\star$) if, for any $h \in [n]$,
\begin{itemize}
    \item $|V \cap [h]| \geq |V^\star \cap [h]|$ (called \textit{left dominance}) and
    \item $|V \cap [-h]| \geq |V^\star \cap [-h]|$ (called \textit{right dominance}).
\end{itemize}

Equivalently, there exists an injection $\ell: V^\star \rightarrow V$ such that $\ell(v) \leq v$ for all $v \in V^\star$ (left dominance) \textit{and} there exists an injection $r: V^\star \rightarrow V$ such that $r(v) \geq v$ for all $v \in V^\star$ (right dominance). Recall that the variables are sorted by their probabilities, so the injections above satisfy that $p_{\ell(v)}\le p_v$ and $p_{r(v)}\ge p_v$.

Clearly, if $|V| = |V^\star|$, then $V \succeq V^\star$ implies $V = V^\star$. But if $|V|>|V^\star|$, then the sets can be different. For example, if $V^\star$ contains the middle third of $[3n]$ and $V = [3n] \setminus V^\star$, then $V \succeq V^\star$ and yet $V$ and $V^\star$ are disjoint. It turns out that even without full knowledge of $V^\star$, but with an appropriately chosen small fraction of the elements (called \emph{milestones} in the following) of $V^\star$, we can efficiently find a set $V$ which is \emph{guaranteed} to dominate $V^\star$ and does not contain many more elements.

We first show that dominance is a desirable property.

\begin{lemma}
    \label{lem:propstop}
    Let $\pi$ and $\pi'$ be partial non-adaptive policies. Suppose that for some $\ell,\ell' \in [n]$, the length-$\ell'$ prefix of $\pi'$ dominates the length-$\ell$ prefix of $\pi$. Then
    \[
        \Pr[\cost(\pi') > \ell'] \leq \Pr[\cost(\pi) > \ell].
    \]
\end{lemma}

\Cref{lem:propstop} follows from the following lemma in a relatively straightforward way. We defer the
proof of \Cref{lem:propstop} to \Cref{sec:computing}. 

\begin{lemma}
    \label{lem:dominance}
    Let $V,V^\star \subseteq [n]$ be such that $V \succeq V^\star$. 
    Then, for any $\ell \in [|V^\star|]$ we have
    \begin{align*}
        \mathrm{Pr}\left[\sum_{i\in V} x_i \geq \ell\right] \geq \mathrm{Pr}\left[\sum_{i\in V^\star} x_i \geq \ell\right] \text{ and }
        \mathrm{Pr}\left[\sum_{i\in V} (1-x_i) \geq \ell\right] \geq \mathrm{Pr}\left[\sum_{i\in V^\star} (1-x_i) \geq \ell\right].
    \end{align*}
\end{lemma}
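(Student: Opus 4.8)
My plan is to prove only the first inequality (the second follows by the symmetric argument, replacing each $x_i$ by $1-x_i$, which swaps the roles of "1-probability" $p_i$ and "0-probability" $1-p_i$; note that left/right dominance are themselves symmetric under reversing the order, so the hypothesis $V\succeq V^\star$ is preserved). The key idea is to exploit the \emph{two} injections provided by dominance together with a coupling/stochastic-domination argument variable by variable. Recall that $V\succeq V^\star$ gives an injection $r:V^\star\to V$ with $r(v)\ge v$ for all $v\in V^\star$, and since variables are sorted by probability, $p_{r(v)}\ge p_v$. So each variable $x_v$ with $v\in V^\star$ is matched to a distinct variable $x_{r(v)}$ in $V$ that is at least as likely to be $1$.

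The core step is then: for the sub-collection $r(V^\star)\subseteq V$, we have $\sum_{i\in r(V^\star)} x_i \succeq_{\mathrm{st}} \sum_{v\in V^\star} x_v$ in the usual stochastic order on $\ZZ$. This is a standard fact — a sum of independent Bernoullis is stochastically monotone in each success probability, proven by a straightforward coupling: for each $v$, couple $x_v$ and $x_{r(v)}$ on a common uniform $[0,1]$ variable $U_v$ so that $x_{r(v)}\ge x_v$ pointwise (possible exactly because $p_{r(v)}\ge p_v$), and the couplings are independent across $v$; then $\sum_{v} x_{r(v)} \ge \sum_v x_v$ pointwise. Hence $\Pr[\sum_{i\in r(V^\star)} x_i \ge \ell] \ge \Pr[\sum_{v\in V^\star} x_v \ge \ell]$ for every $\ell$. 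Finally, since $r$ is an injection, $r(V^\star)\subseteq V$, so adding the remaining (nonnegative, $\{0,1\}$-valued) variables in $V\setminus r(V^\star)$ can only increase the sum: $\Pr[\sum_{i\in V} x_i \ge \ell] \ge \Pr[\sum_{i\in r(V^\star)} x_i\ge \ell]$. Chaining the two inequalities gives the claim for all $\ell$, in particular $\ell\in[|V^\star|]$.

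I expect the main (and essentially only) obstacle to be organizational rather than mathematical: making sure the monotonicity-of-Bernoulli-sums step is stated cleanly, and making sure I invoke the \emph{right} one of the two injections. Note that for the first inequality (about $\sum x_i$) I want the injection that moves elements to the \emph{right} (higher probability), i.e.\ $r$ coming from right dominance; for the second inequality (about $\sum(1-x_i)$) I want the injection $\ell$ moving elements to the \emph{left} (lower $p_i$, hence higher $1-p_i$), coming from left dominance. One subtlety worth a sentence: the hypothesis $|V|\ge|V^\star|$ is needed for the injections to exist at all, and it is implicit in $V\succeq V^\star$ as defined. No delicate probabilistic estimates are required — everything reduces to a pointwise coupling and monotonicity, so the write-up should be short.
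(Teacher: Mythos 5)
Your proposal is correct and matches the paper's own argument: the paper likewise takes the injection $f:V^\star\to V$ with $p_v\le p_{f(v)}$ coming from right dominance, couples each pair so that $x_{f(v)}\ge x_v$ pointwise, and then uses injectivity plus nonnegativity of the remaining variables in $V$ to conclude, with the second inequality handled by symmetry. The only cosmetic difference is the coupling mechanism (common uniforms versus directly demanding $x_v=1\Rightarrow x_{f(v)}=1$), which is the same coupling in substance.
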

\begin{proof}
    We focus on showing the first inequality; the proof for the second inequality is symmetric.
    Let $\ell \in [|V^\star|]$ be arbitrary.
    Since $V \succeq V^\star$, there exists an injective mapping $f: V^\star \rightarrow V$ such that $p_i \leq p_{f(i)}$ for all $i \in V^\star$.
    
    We couple $X_{V^\star} = \{x_i \mid i \in V^\star\}$ and $X_{V} = \{x_i \mid i \in V\}$ by demanding that $x_i=1$ implies $x_{f(i)}=1$ for all $i\in V^\star$. (If $f(i)=i$, this is a vacuous demand.) This is possible since $p_i \leq p_{f(i)}$ for all $i \in V^\star$. Also note that $X_{V^\star}$ is still independent and $X_{V}$ is still independent (but $X_{V^\star}\cup X_{V}$ is not independent, unless $V^\star=V$); hence the inequality that is to be shown remains unaffected.

    For all $i\in V^\star$, now define $\delta_i := x_{f(i)}-x_i$ and notice that, by our coupling, $\delta_i$ is non-negative. Thus,
    $$\mathrm{Pr}\left[\sum_{i\in V} x_i \geq \ell\right]
    \geq \mathrm{Pr}\left[\sum_{i\in V^\star} x_{f(i)} \geq \ell\right]
    = \mathrm{Pr}\left[\sum_{i\in V^\star} x_{i}+\delta_i \geq \ell\right]
    \geq \mathrm{Pr}\left[\sum_{i\in V^\star} x_{i} \geq \ell\right],$$
    where we use injectivity of $f$ in the first step, the definition of $\delta_i$ in the second step, and non-negativity of $\delta_i$ in the third step. The claim follows.
\end{proof}

Recall that we are not only interested in a single inequality of the type that \Cref{lem:propstop} states; \Cref{lem:pq} demands multiple such inequalities. To this end, we do not only seek to dominate a single set. It will, however, be sufficient to think of the optimal solution in terms of a sequence of $b\in O_\eps(a'/a)$ disjoint sets (``buckets'') so that the order within each bucket does not matter. Then, we aim to find another sequence of disjoint sets, also of length $b$, such that we have the aforementioned dominance property for each two corresponding prefixes of the two sequences.

Formally, let $(V^\star_1,V^\star_2,\dots,V^\star_b)$ be a
$b$-tuple of disjoint subsets of $[n]$ with $|V^\star_i|\geq 1/\eps$ for all $i\in[b]$. 
We are going to enumerate a number of
$b$-tuples of the form $(V_1,V_2,\dots,V_{b})$ with the following properties:
\begin{itemize}
    \item[(i)] For all $b$-tuples in the enumeration, $V_1,\dots,V_b$ are disjoint subsets of $[n]$,
    \item[(ii)] For all $b$-tuples in the enumeration, $|V_i| \leq (1+2\eps)|V^\star_i|$ for all $i \in [b]$, and
    \item[(iii)] For at least one $b$-tuple in the enumeration, it holds that $\bigcup_{i'=1}^i V_{i'} \succeq \bigcup_{i'=1}^i V^\star_{i'}$ for all $i \in [b]$.
\end{itemize}

We first show that this will indeed to lead to inequalities akin to~\Cref{lem:pq} (if we require certain sizes of $V^\star_i$ for all $i\in[b]$).

\begin{lemma}\label{lem:ann}
    Let $a$ and $a'$ be positive integers with $2(1+\eps)^{2}/\eps\leq a < a' \leq n$, and let $b$ be another positive integer.
    Furthermore:
    \begin{itemize}
        \item Let $(V^\star_1,V^\star_2,\dots,V^\star_b)$ be a $b$-tuple of disjoint subsets of $[n]$ such that $\pi^\star=\pi^\star_1\circ\dots\circ\pi^\star_{b}\circ\pi^\star_{b+1}$, $V^\star_i=\set(\pi_i^\star)$ for all $i\in[b]$, $|V^\star_1| = \lfloor (a-1)/(1+2\eps)\rfloor$, and $|V^\star_i| \leq \lceil\eps|V^\star_1|\rceil$ for all $i \in [b]\setminus\{1\}$.
        \item Let $(V_1,V_2,\dots,V_b)$ be a $b$-tuple of disjoint subsets of $[n]$ with \[
    \bigcup_{i'=1}^i V_{i'} \succeq \bigcup_{i'=1}^i V^\star_{i'}
    \]
    and $|V_i| \leq (1+2\eps)|V^\star_i|$ for all $i \in [b]$. Also let $\pi=\pi_1\circ\dots\circ \pi_b$ be a partial policy, where $\pi_i$ is an arbitrary partial policy with $\set(\pi_i)=V_i$ for all $i\in[b]$.
    \end{itemize}
    Then
    \[
        \Pr[\cost(\pi) \geq \ell] \leq \Pr[(1+2\eps)^{3}\cost(\pi^\star) \geq \ell]
    \]
    for all $\ell\in\{a,a+1,\dots,a'\}$.
\end{lemma}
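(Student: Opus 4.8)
\emph{Proof plan.} Fix an arbitrary $\ell\in\{a,a+1,\dots,a'\}$; since $\cost(\pi^\star)$ is a positive integer in the unit-cost setting, the plan is to prove $\Pr[\cost(\pi)\ge \ell]\le \Pr[\cost(\pi^\star)\ge \ell/(1+2\eps)^3]$, which is exactly the claimed inequality. Write $W_i=\bigcup_{i'=1}^i V_{i'}$ and $W_i^\star=\bigcup_{i'=1}^i V_{i'}^\star$; because $V_1,\dots,V_b$ are disjoint, the first $|W_i|$ tests of $\pi=\pi_1\circ\dots\circ\pi_b$ are exactly the elements of $W_i$, and analogously (using the given decomposition of $\pi^\star$ and disjointness of the $V_i^\star$) the first $|W_i^\star|$ tests of $\pi^\star$ are exactly the elements of $W_i^\star$. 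The central object is $j=j(\ell)\in[b]$, the largest index with $|W_j|\le\ell-1$; I would first check it is well-defined, which holds since $|W_1|=|V_1|\le(1+2\eps)|V_1^\star|=(1+2\eps)\lfloor(a-1)/(1+2\eps)\rfloor\le a-1\le\ell-1$, using the exact value of $|V_1^\star|$ and $\ell\ge a$.

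With $j$ fixed, I would chain four inequalities. First, $|W_j|\le\ell-1$ implies the event $\{\cost(\pi)\ge\ell\}$ is contained in $\{\cost(\pi)>|W_j|\}$. Second, the length-$|W_j|$ prefix of $\pi$ is the set $W_j$ and the length-$|W_j^\star|$ prefix of $\pi^\star$ is $W_j^\star$, and $W_j\succeq W_j^\star$ by hypothesis, so \Cref{lem:propstop} gives $\Pr[\cost(\pi)>|W_j|]\le\Pr[\cost(\pi^\star)>|W_j^\star|]$. Third, integrality of $\cost(\pi^\star)$ gives $\Pr[\cost(\pi^\star)>|W_j^\star|]=\Pr[\cost(\pi^\star)\ge|W_j^\star|+1]$. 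Fourth, it then suffices to prove $|W_j^\star|+1\ge\ell/(1+2\eps)^3$, which yields $\Pr[\cost(\pi^\star)\ge|W_j^\star|+1]\le\Pr[\cost(\pi^\star)\ge\ell/(1+2\eps)^3]$; concatenating the four steps proves the claim. One sees here why three factors of $(1+2\eps)$ are needed: one absorbs the fact that $W_j$ may have up to a $(1+2\eps)$-factor more elements than $W_j^\star$, and the other two provide the slack $(1+2\eps)^2(1-\eps)\ge1$ used in the fourth step.

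The only delicate point, and where essentially all the work lies, is the fourth step, i.e.\ lower-bounding $|W_j^\star|$. If $j<b$, then $|W_{j+1}|\ge\ell$ by maximality of $j$, so $|W_j|\ge\ell-|V_{j+1}|$; and since $j+1\ge2$ the size bound gives $|V_{j+1}|\le(1+2\eps)|V_{j+1}^\star|\le(1+2\eps)\lceil\eps|V_1^\star|\rceil\le\eps(a-1)+(1+2\eps)=\eps a+1+\eps\le\eps\ell+1+\eps$, using $|V_1^\star|\le(a-1)/(1+2\eps)$ and $a\le\ell$. Combining with $|W_j^\star|\ge|W_j|/(1+2\eps)$ (which follows by summing $|V_i|\le(1+2\eps)|V_i^\star|$) and multiplying out,
\[
(1+2\eps)^3\bigl(|W_j^\star|+1\bigr)\ \ge\ (1+2\eps)^2\bigl((1-\eps)\ell+\eps\bigr)\ \ge\ (1+2\eps)^2(1-\eps)\,\ell\ \ge\ \ell,
\]
where the last inequality uses $(1+2\eps)(1-\eps)=1+\eps-2\eps^2\ge1$ for $\eps\le1/2$, which we may assume. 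The remaining case $j=b$ will not occur once $\sum_{i=1}^b|V_i|\ge a'$ (which we may assume, padding with one extra bucket if necessary, and which is the case in the invocation of \Cref{lem:pq}): then $|W_b|\ge a'\ge\ell>\ell-1$, so $b$ is not an admissible choice of $j$. Finally, the hypothesis $a\ge2(1+\eps)^2/\eps$ is precisely what makes $|V_1^\star|$ of order $1/\eps$, so that the $+1$ rounding loss in $\lceil\eps|V_1^\star|\rceil$ is comfortably absorbed by the slack in the displayed chain; everything outside the fourth step is immediate from \Cref{lem:propstop} and the block structure of $\pi$ and $\pi^\star$.
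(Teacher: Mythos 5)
Your proof follows essentially the same route as the paper's: pick the largest bucket-prefix $W_j$ of $\pi$ that fits below $\ell$, invoke \Cref{lem:propstop} via the assumed prefix dominance, and then lower-bound the corresponding optimal prefix length arithmetically; the chain of four inequalities is exactly the paper's chain. Your arithmetic in the ``delicate'' step is in fact slightly more careful than the paper's, since you keep the factor $(1+2\eps)$ in the bound $|V_{j+1}|\le(1+2\eps)\lceil\eps|V_1^\star|\rceil$ (the paper's displayed bound $\ell'\ge \ell-1-\lceil\eps|V_1^\star|\rceil$ silently drops it), and your slack $(1+2\eps)^2(1-\eps)\ge 1$ (for $\eps\le 1/2$, which is also implicitly needed by the paper) closes the gap without even using $a\ge 2(1+\eps)^2/\eps$. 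The one place where you step outside the lemma's literal hypotheses is the case $j=b$: you dispose of it by assuming $\sum_{i=1}^b|V_i|\ge a'$, which is not stated in the lemma, and ``padding with an extra bucket'' is not really available since the tuples and $\pi$ are given. Note, however, that the paper's own proof has the same unstated assumption (its use of maximality of $i$ only yields the bound when $i<b$), and in the only invocation, in the proof of \Cref{lem:pq}, the buckets of $\pi^\star$ sum to exactly $a'$, so left dominance gives $|W_b|\ge a'\ge\ell$ and the case indeed never arises; so your treatment is no less rigorous than the paper's, and you deserve credit for flagging it explicitly.
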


\begin{proof}
    Consider some $\ell \in \{a,a+1,\dots,a'\}$.

    Let $i \leq b$ be the largest integer such that $\sum_{i'=1}^i |V_{i'}| \leq \ell-1$ and let $\ell' = \sum_{i'=1}^i |V_{i'}|$.
    We observe that
    \begin{equation}\label{eq:v1-size}
        |V_1| \leq (1+2\eps)|V^\star_1| \leq a-1 \leq \ell - 1,
    \end{equation}
    where the second inequality follows from the choice of $|V^\star_1| = \lfloor (a-1)/(1+2\eps) \rfloor$. Thus, $i \geq 1$.
    Since $|V^\star_i| \leq \lceil\eps|V^\star_1|\rceil$ for $i \geq 2$ and as $i$ is maximal, we have
    \begin{align}
    \ell' &\geq \ell-1 - \lceil \eps|V^\star_1|\rceil \notag\\
          &> \ell-1 - \eps|V^\star_1|-1\notag\\
          &\geq \ell - 1- (a-1) \frac{\eps}{1+2\eps}-1 \notag\\
          &\geq \ell - 1 - (\ell-1) \frac{\eps}{1+2\eps}-1 \notag\\
          &= \frac{\ell-1}{1+2\eps} -1\notag\\
          &> \frac{\ell}{1+2\eps} - 2\notag\\
          &\geq \frac{\ell}{(1+2\eps)^2},\label{eq:l-lprime}
    \end{align}
    where the third inequality follows from \Cref{eq:v1-size} and the last inequality holds as $\ell \geq a \geq 2(1+\eps)^{2}/\eps$.

    Let $\ell^\star = \sum_{i'=1}^i |V^\star_{i'}|$. Using that $\Pr[\cost(\pi') \geq x]$ does not increase as $x$ increases for any partial policy $\pi'$, we obtain
    \begin{align*}
        \Pr[\cost(\pi) \geq \ell] &\leq \Pr[\cost(\pi) > \ell-1]\\
        &\leq \Pr[\cost(\pi) > \ell'] \\
        &\leq \Pr\left[\cost(\pi^\star) > \ell^\star\right]    \\
        &\leq \Pr\left[\cost(\pi^\star) > \ell'/(1+2\eps)\right]\\
        &\leq \Pr\left[\cost(\pi^\star) \geq \ell/(1+2\eps)^{3}\right].
    \end{align*}
    Indeed, the second inequality follows from the definition of $\ell'$. The third inequality follows from $\bigcup_{i'=1}^i V_{i'} \succeq \bigcup_{i'=1}^i V^\star_{i'}$ and \Cref{lem:propstop}. The fourth inequality follows from $\ell^\star \geq \ell'/(1+2\eps)$, which is due to $|V_i| \leq |V^\star_i|/(1+2\eps)$ for all $i \in [b]$. Finally, the last inequality follows from~\Cref{eq:l-lprime}.
\end{proof}

It remains to show that we can indeed enumerate $b$-tuples with the desired properties (i)--(iii). Towards this, consider some $(V^\star_1,V^\star_2,\dots,V^\star_b)$,
$i \in [b]$, and $j \in [1/\eps-1]$. We denote by $m(V^\star_i,j)$ the $(\lfloor
j\eps|V^\star_i|\rfloor)$-th smallest element in $V^\star_i$. We call $m(V^\star_i,j)$ the $j$-th \textit{milestone} of $V^\star_i$. (These milestones will later be ``guessed.'')

We present an algorithm (\Cref{alg:alg}) that receives the sizes $|V^\star_1|,\dots,|V^\star_b|$ as well as such milestones as input and computes, when the milestones are correct, a $b$-tuple $(V_1,V_2,\dots,V_{b})$ with the desired properties (in particular (iii)).

The algorithm does the following for each $i\in[b]$. There is a counter $c$ that is $\eps|V^\star_i|$ initially. The algorithm first does a forward pass over the elements (\emph{the forward loop}, from line~\ref{line:loop1-first} to~\ref{line:loop1-last}), greedily adding available elements to $V_i$ as long as the value of $c$ is at least $1$. During this pass, we increment $c$ by $\eps|V^\star_i|$ whenever we encounter a milestone. The forward loop alone is enough to guarantee left dominance. Note that $c$ takes non-integer values if $\eps|V^\star_i|$ is not integer. 

Then, the algorithm increments the counter by $\lceil \eps|V^\star_i|\rceil$ and starts the \emph{backward loop} (lines~\ref{line:loop2-first} to~\ref{line:loop2-last}), which does a backward pass over the elements, greedily adding available elements to $V_i$ as long as the counter $c$ is at least $1$. This is intuitively what ensures right dominance.

\begin{algorithm}[h]
    \caption{}
    \label{alg:alg}
    \KwIn{$\eps>0$ such that $1/\eps\in\N$; $|V^\star_i|\geq 1/\eps$, and $m(V^\star_i,j)$ for $i \in [b]$ and $j \in [1/\eps-1]$}
    $V_i \gets \emptyset$ for all $i \in [b]$\;\label{line:init}
    \For{$i \in [b]$}{
        $c \gets \eps|V^\star_i|$\;
        \For{$f$ from $1$ to $n$}{\label{line:loop1-first}
            \If{$f \in \{m(V^\star_i,j) \mid j \in [1/\eps-1]\}$}{
                $c \gets c+\eps|V^\star_i|$\;\label{line:loop1-inc}
            }
            \If {$c \geq 1$ and $f \notin \bigcup_{i'=1}^i V_{i'}$\label{line:loop1-check}}{
                $V_i \gets V_i \cup \{f\}$\;\label{line:loop1-add}
                $c \gets c - 1$\;\label{line:loop1-last}
            }
        }
        $c \gets c + \lceil\eps |V^\star_i|\rceil$\;\label{line:loop2-inc}
        \For{$f$ from $n$ to $1$}{\label{line:loop2-first}
            \If {$c \geq 1$ and $f \notin \bigcup_{i'=1}^i V_{i'}$}{\label{line:loop2-check}
                $V_i \gets V_i \cup \{j\}$\;\label{line:loop2-add}
                $c \gets c - 1$\;\label{line:loop2-last}
            }
        }
    }
    \Return $(V_1,V_2,\dots,V_b)$\;
\end{algorithm}

We illustrate this algorithm in \Cref{fig:dominance}. The following lemma shows that \Cref{alg:alg} fulfills its purpose.

\begin{figure}[h]
    \centering
    \usetikzlibrary{patterns,patterns.meta}
\tikzset{
  picked/.style={
    pattern={Lines[angle=45,distance=2pt]},
    pattern color=gray
  }
}
\tikzset{
  milestone/.style={
    pattern={Hatch[angle=45,distance=1pt]},
    pattern color=gray
  }
}
\begin{subfigure}{\textwidth}
\centering
\begin{tikzpicture}[scale=0.5]
    \useasboundingbox (-1,-1) to (21,2);
    \draw (0,0) rectangle node {$1$} (1,1);
    \draw (1,0) rectangle node {$2$} (2,1);
    \draw[picked] (2,0) rectangle node {$3$} (3,1);
    \draw (3,0) rectangle node {$4$} (4,1);
    \draw (4,0) rectangle node {$5$} (5,1);
    \draw[milestone] (5,0) rectangle node {$6$} (6,1);
    \draw (6,0) rectangle node {$7$} (7,1);
    \draw[picked] (7,0) rectangle node {$8$} (8,1);
    \draw (8,0) rectangle node {$9$} (9,1);
    \draw[milestone] (9,0) rectangle node {$10$} (10,1);
    \draw (10,0) rectangle node {$11$} (11,1);
    \draw[picked] (11,0) rectangle node {$12$} (12,1);
    \draw (12,0) rectangle node {$13$} (13,1);
    \draw (13,0) rectangle node {$14$} (14,1);
    \draw[milestone] (14,0) rectangle node {$15$} (15,1);
    \draw[picked] (15,0) rectangle node {$16$} (16,1);
    \draw (16,0) rectangle node {$17$} (17,1);
    \draw (17,0) rectangle node {$18$} (18,1);
    \draw[picked] (18,0) rectangle node {$19$} (19,1);
    \draw (19,0) rectangle node {$20$} (20,1);
\end{tikzpicture}
\subcaption{An example set $V^\star_1$ indicated with shade. Milestones for $\eps = 1/4$ indicated with darker shade.}
\end{subfigure}

\begin{subfigure}{\textwidth}
\centering
\begin{tikzpicture}[scale=0.5]
\useasboundingbox (-1,-1) to (21,2);
        \draw[picked] (0,0) rectangle node {$1$} (1,1);
        \draw[picked] (1,0) rectangle node {$2$} (2,1);
        \draw (2,0) rectangle node {$3$} (3,1);
        \draw (3,0) rectangle node {$4$} (4,1);
        \draw (4,0) rectangle node {$5$} (5,1);
        \draw[milestone] (5,0) rectangle node {$6$} (6,1);
        \draw[picked] (6,0) rectangle node {$7$} (7,1);
        \draw (7,0) rectangle node {$8$} (8,1);
        \draw (8,0) rectangle node {$9$} (9,1);
        \draw[milestone] (9,0) rectangle node {$10$} (10,1);
        \draw[picked] (10,0) rectangle node {$11$} (11,1);
        \draw (11,0) rectangle node {$12$} (12,1);
        \draw(12,0) rectangle node {$13$} (13,1);
        \draw (13,0) rectangle node {$14$} (14,1);
        \draw[milestone] (14,0) rectangle node {$15$} (15,1);
        \draw[picked] (15,0) rectangle node {$16$} (16,1);
        \draw (16,0) rectangle node {$17$} (17,1);
        \draw (17,0) rectangle node {$18$} (18,1);
        \draw[picked] (18,0) rectangle node {$19$} (19,1);
        \draw[picked] (19,0) rectangle node {$20$} (20,1);
\end{tikzpicture}
\subcaption{The set $V_1$ that \Cref{alg:alg} produces given the shaded milestones.}
\end{subfigure}

    \caption{Example of output generated by \Cref{alg:alg} for a single iteration of the main loop.}
    \label{fig:dominance}
\end{figure}

\ \newline

\begin{lemma}\label{lem:alg}
    Let $(V^\star_1,V^\star_2,\dots,V^\star_b)$ be any $b$-tuple of disjoint subsets of $[n]$. The output $(V_1,V_2,\dots,V_b)$ of \Cref{alg:alg} when given $|V^\star_i|\geq 1/\eps$ for $i \in [b]$, and $m(V^\star_i,j)$ for $i \in [b]$ and $j \in [1/\eps-1]$ satisfies \[
    \bigcup_{i'=1}^i V_{i'} \succeq \bigcup_{i' =1}^i V^\star_{i'} 
    \]
    and $|V_i| \leq (1+2\eps)|V^\star_i|$ for all $i \in [b]$.
\end{lemma}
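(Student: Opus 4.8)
The plan is to analyze one iteration $i$ of the outer loop of Algorithm~\ref{alg:alg} in isolation, under the (correct) assumption that the milestones $m(V^\star_i,j)$ are those of the true sets $V^\star_i$, and to establish three things: the size bound $|V_i|\le(1+2\eps)|V^\star_i|$, left dominance $\bigcup_{i'\le i}V_{i'}\succeq_{\mathrm{left}}\bigcup_{i'\le i}V^\star_{i'}$, and right dominance. I would carry out the three parts in that order, since the size bound is needed to know the loops terminate (i.e.\ don't run out of "budget" in a pathological way) and since the dominance arguments are cleanest when phrased as an induction on $i$ using the injection characterization of $\succeq$.

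First, the size bound. In iteration $i$, the forward loop starts $c$ at $\eps|V^\star_i|$ and adds $\eps|V^\star_i|$ to $c$ exactly $1/\eps-1$ times (once per milestone), for a total budget of $\eps|V^\star_i|\cdot(1/\eps)=|V^\star_i|$ added during the forward pass; each element added to $V_i$ consumes exactly $1$ from $c$, so the forward loop contributes at most $|V^\star_i|$ elements. The backward loop then adds $\lceil\eps|V^\star_i|\rceil$ to $c$, contributing at most $\lceil\eps|V^\star_i|\rceil\le\eps|V^\star_i|+1\le 2\eps|V^\star_i|$ further elements (using $|V^\star_i|\ge 1/\eps$). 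Summing gives $|V_i|\le(1+2\eps)|V^\star_i|$. (One should also note $c$ stays bounded, so "leftover" budget at the end of a loop is harmless.)

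Second, left dominance. I would prove by induction on $i$ that $\bigl|\bigcup_{i'\le i}V_{i'}\cap[h]\bigr|\ge\bigl|\bigcup_{i'\le i}V^\star_{i'}\cap[h]\bigr|$ for every prefix $[h]$. The key invariant for a single forward loop is: just after processing position $f$, the number of elements of $V_i$ in $[f]$ plus the number already "covered" by earlier buckets $\bigcup_{i'<i}V_{i'}\cap[f]$ is at least $\lfloor(\text{fraction of }V^\star_i\text{-milestone-weight seen})\rfloor$ — more precisely, the counter $c$ after position $f$ equals (initial budget $+$ milestone increments seen in $[f]$) $-$ (elements of $V_i$ placed in $[f]$), and $c\ge 0$ always; since milestone $j$ sits at the $\lfloor j\eps|V^\star_i|\rfloor$-th smallest element of $V^\star_i$, having seen the first $j$ milestones within $[f]$ certifies $|V^\star_i\cap[f]|\ge\lfloor j\eps|V^\star_i|\rfloor$ roughly, while the budget accumulated is $(j+1)\eps|V^\star_i|$, enough to have placed $\lfloor j\eps|V^\star_i|\rfloor$ elements unless they were blocked — and if blocked, they were blocked precisely because positions in $[f]$ were already occupied by $\bigcup_{i'<i}V_{i'}$, which by the induction hypothesis is fine. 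Turning this budget-vs-placement bookkeeping into the clean prefix inequality is the technical heart of this part.

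Third, right dominance, i.e.\ $\bigl|\bigcup_{i'\le i}V_{i'}\cap[h]\bigr|\ge\bigl|\bigcup_{i'\le i}V^\star_{i'}\cap[-h]\bigr|$ for all $h$; equivalently, reading from the right, the largest $h$ elements of $\bigcup_{i'\le i}V^\star_{i'}$ inject into $\bigcup_{i'\le i}V_{i'}$ with each image at least its preimage. Here the backward loop is doing the work: after the forward pass leaves $c\ge 0$, we top $c$ up by $\lceil\eps|V^\star_i|\rceil$ and sweep right-to-left, greedily grabbing free positions. The analysis mirrors part two but scanning from $n$ downward: I would track, as the backward sweep reaches position $f$, a lower bound on $\bigl|\bigcup_{i'\le i}V_{i'}\cap\{f,\dots,n\}\bigr|$ in terms of $\bigl|V^\star_i\cap\{f,\dots,n\}\bigr|$ plus the inductive right-dominance guarantee from buckets $i'<i$; the extra $\lceil\eps|V^\star_i|\rceil$ of budget is exactly the slack needed to absorb the floor in the milestone positions and the fact that the backward loop has no milestone increments of its own. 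Again one combines the per-bucket guarantee with the induction hypothesis on the union of earlier buckets.

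The main obstacle I anticipate is the bookkeeping in parts two and three: carefully relating (a) the counter value $c$ at a given scan position, (b) the number of milestones of $V^\star_i$ that lie in the scanned prefix/suffix, (c) the true count $|V^\star_i\cap[f]|$ which is only pinned down at milestone positions (hence the floors $\lfloor j\eps|V^\star_i|\rfloor$ and the correction terms $+1$, $\lceil\cdot\rceil$), and (d) which positions are "blocked" by $\bigcup_{i'<i}V_{i'}$ and why that blocking is benign by induction. Making the off-by-$\eps|V^\star_i|$ / off-by-one accounting tight enough that the stated dominance (with no slack — it is "$\succeq$" exactly) holds, while only overshooting the size by a factor $1+2\eps$, is where the argument must be done with care rather than by routine estimates.
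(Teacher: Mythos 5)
Your skeleton matches the paper's proof: the size bound via total counter budget ($\eps|V^\star_i|$ initial, $(1/\eps-1)\eps|V^\star_i|$ from milestones, $\lceil\eps|V^\star_i|\rceil$ before the backward pass, using $|V^\star_i|\ge 1/\eps$), induction on $i$, and a counter-versus-placement analysis of the forward and backward passes. But the step you yourself flag as ``the technical heart'' is exactly where your sketched justification does not work, so there is a genuine gap. Your plan for left dominance handles blocked positions by saying they ``were blocked precisely because positions in $[f]$ were already occupied by $\bigcup_{i'<i}V_{i'}$, which by the induction hypothesis is fine.'' It is not fine: the induction hypothesis only gives $\bigl|\bigcup_{i'<i}V_{i'}\cap[h]\bigr|\ge\bigl|\bigcup_{i'<i}V^\star_{i'}\cap[h]\bigr|$, with no surplus, so those occupied positions are already needed to cover $\bigcup_{i'<i}V^\star_{i'}\cap[h]$ and cannot also absorb a shortfall of $V_i$ against $V^\star_i$ on that prefix (and such a shortfall can genuinely occur at positions $h\in V$, e.g.\ when milestones are bunched). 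The paper's resolution, which your plan is missing, is to prove the per-bucket inequalities \emph{only} at positions $h\notin\bigcup_{i'\le i}V_{i'}$ --- there the counter must satisfy $c<1$ when $f=h$ is examined, which pins $|V_i\cap[h]|$ to at least $\lfloor (M+1)\eps|V^\star_i|\rfloor\ge|V^\star_i\cap[h]|$ --- and then to reduce arbitrary $h$ to this case by a minimal-counterexample argument (the least $h$ violating prefix dominance of the unions cannot lie in $V$). Your floated invariant (``placed plus blocked is at least the floored budget'') is neither proved nor needed in this form, and it is not a substitute for that device.

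The right-dominance half has a second, quantitative gap. The exact (slack-free) suffix inequality requires two ingredients you do not supply: first, the elements counted in the suffix must include the \emph{forward}-pass additions at positions beyond $h$ (the budget $(1/\eps-1-M)\eps|V^\star_i|$ from the remaining milestones is spent there or carried over), not just what the backward sweep grabs; second, one needs that at a non-picked position in the backward pass the counter is exactly $0$, which follows because by then the total of all increments, $|V^\star_i|+\lceil\eps|V^\star_i|\rceil$, is an integer and decrements are unit, so $c$ is integral. With only ``$c<1$'' the accounting loses $1$, and the final integer-difference bound becomes $\le 1$ rather than $\le 0$, which is not enough for the exact dominance $\succeq$ you need. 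So while your overall route is the paper's route, completing it requires precisely these devices, which the proposal neither states nor replaces.
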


\begin{proof} 
    We start by observing that $|V_i| \leq (1+2\eps)|V_i^\star|$ holds for all $i \in [b]$ as at most \[
    \eps|V_i^\star| + \left(\frac{1}{\eps}-1\right)\eps|V_i^\star| + \lceil\eps|V_i^\star|\rceil \leq (1+\eps)|V_i^\star| + 1 \leq (1+2\eps)|V_i^\star|
    \]
    elements are added to $V_i$, where the last inequality holds as $|V_i| \geq 1/\eps$.

    We now show that $\bigcup_{i'=1}^i V_{i'} \succeq \bigcup_{i' =1}^i V^\star_{i'}$ by induction on $i$. Denote $V_0 = V^\star_0 = \emptyset$ such that the base case $V_0 \succeq V^\star_0$ is trivial. Let $i \geq 1$ and suppose that $\bigcup_{i'=0}^{i-1} V_{i'} \succeq
    \bigcup_{i'=0}^{i-1} V^\star_{i'}$. Let $V = \bigcup_{i'=0}^{i} V_{i'}$ and let $V^\star =
    \bigcup_{i'=0}^{i} V^\star_i$. We must show that $V \succeq V^\star$. We claim the following.

    \bigskip

    \begin{claim}\label{claim:alg}
        For any $h \in [n] \setminus V$, we have 
        \begin{itemize}
        \item[(i)]$|V_i \cap [h]| \geq |V_i^\star \cap [h]|$ and 
        \item[(ii)] $|V_i \cap \{n,n-1,\dots,h\}]| \geq |V_i^\star \cap \{n,n-1,\dots,h\}|$
        \end{itemize}
    \end{claim}
    \claimproof{\Cref{claim:alg}}{  
   
    Consider any $h \in [n] \setminus V$. Let \[
    M = |\{j \in [1/\eps-1] \mid m(V_i^\star,j) < h\}|
    \]
    be the number of milestones smaller than $h$. Note that $h$ itself cannot be a milestone as any milestone is added to $V_i$ on line~\ref{line:loop1-add} after $c$ is incremented by at least $1$ on line~\ref{line:loop1-inc}.

    We start by observing that \begin{equation}
    \label{eq:capbound}
       \lfloor M\eps|V_i^\star|\rfloor \leq |V_i^\star \cap [h-1]| \leq |V_i^\star \cap [h]| \leq \lfloor(M+1)\eps|V_i^\star|\rfloor,
    \end{equation}
    which follows directly from the definition of milestones and the fact that $h$ is not a milestone. See \Cref{fig:claima}.

    \begin{figure}
        \centering
        \begin{tikzpicture}[scale=0.5]
    \node (lab) at (-1.5,0.5) {$V_i$};
    \draw[picked] (0,0) rectangle node {$1$} (1,1);
    \draw[picked] (1,0) rectangle node {$2$} (2,1);
    \draw[] (2,0) rectangle node {$3$} (3,1);
    \draw (3,0) rectangle node {$\cdots$} (9.5,1);
    \draw (9.5,0) rectangle node {$h$} (10.5,1);
    \draw (10.5,0) rectangle node {$\cdots$} (17,1);
    \draw (17,0) rectangle node {$18$} (18,1);
    \draw (18,0) rectangle node {$19$} (19,1);
    \draw[picked] (19,0) rectangle node {$20$} (20,1);

    \draw[decorate,decoration={brace,amplitude=8pt}] (0,1.2) -- node[yshift=0.8cm] {$M$ milestones} (9.5,1.2);

\begin{scope}[yshift=-1cm]
    \draw[->, very thick] (0,0) to node[fill=white] {Forward loop} (9.5,0);
    \node[rotate=-90] (c1) at (10,-0.5) {\small $c < 1$};
    \draw[->, very thick] (10.5,0) to (20,0);
    \draw[->, very thick] (20,-1) to node[fill=white] {Backward loop} (10.5,-1);
\end{scope}

\begin{scope}[yshift=-4cm]
    \node (lab2) at (-1.5,0.5) {$V^\star_i$};
    \draw[picked] (0,0) rectangle node {$1$} (1,1);
    \draw[] (1,0) rectangle node {$2$} (2,1);
    \draw[picked] (2,0) rectangle node {$3$} (3,1);
    \draw (3,0) rectangle node {$\cdots$} (9.5,1);
    \draw[picked] (9.5,0) rectangle node {$h$} (10.5,1);
    \draw (10.5,0) rectangle node {$\cdots$} (17,1);
    \draw (17,0) rectangle node {$18$} (18,1);
    \draw[picked] (18,0) rectangle node {$19$} (19,1);
    \draw (19,0) rectangle node {$20$} (20,1);

    \draw[decorate,decoration={brace,amplitude=8pt,mirror}] (0,-0.2) -- node[yshift=-0.8cm,align=left] {$\leq\lfloor (M+1)\eps |V_i^\star|\rfloor$} (9.5,-0.2);
    \draw[decorate,decoration={brace,amplitude=8pt,mirror}] (10.5,-0.2) -- node[yshift=-0.8cm,align=left] {$\leq |V_i^\star| - \lfloor M\eps |V_i^\star|\rfloor$} (20,-0.2);
\end{scope}
\end{tikzpicture}
        \caption{Illustration of the setup in the proof of \Cref{claim:alg}. Shaded boxes indicate elements in the corresponding set.}
        \label{fig:claima}
    \end{figure}
    
    To show (i), consider the iteration of the forward loop where $f = h$. As $h \notin V$, $h$ was not added to $V_i$ on line~\ref{line:loop1-add}, implying that $c < 1$ on line~\ref{line:loop1-check}. As $c$ is initially $\eps|V_i^\star|$ and was incremented by $M\eps|V_i^\star|$ before this iteration, we must have \[
    |V_i \cap [h]| = \lfloor (M+1)\eps|V_i^\star|\rfloor \geq |V_i^\star \cap [h]|,
    \]
    where the last inequality follows from \Cref{eq:capbound}.

    Define $h' = n-h+1$ such that $[-h'] = \{n,n-1,\dots,h\}$.
    To show (ii), consider the iteration of the backward loop where $f=h$. As $h \notin V$, $h$ was not added to $V_i$ on line~\ref{line:loop2-add}, implying that $c < 1$ in this iteration. In fact, we must have $c=0$ as the fractional increments of $c$ sum to $\eps|V_i^\star| + (1/\eps-1)\eps|V_i^\star| = |V_i^\star|$ and $c$ is only ever decremented by $1$. Since $c$ was incremented by $(1/\eps-1-M)\eps|V_i^\star|+\lceil\eps|V_i^\star|\rceil$ between the iteration of the forward loop where $f=h$ and the iteration of the backward loop where $f=h$, we must have \[
    |V_i \cap [-h']| \geq (1/\eps-1-M)\eps|V_i^\star| + \lceil\eps|V_i^\star|\rceil
    \]
    We see that also
    \begin{equation*}
    |V_i^\star \cap [-h']| = |V_i^\star| - |V_i^\star \cap [h-1]| \leq |V_i^\star| - \lfloor M\eps|V_i^\star|\rfloor
    \end{equation*}
    where we use \Cref{eq:capbound} in the second inequality. Subtracting the above two inequalities we get
    \begin{align*}
    |V_i^\star \cap [-h']|-|V_i \cap [-h']| &\leq  |V_i^\star| - \lfloor M\eps|V_i^\star|\rfloor - (1/\eps-1-M)\eps|V_i^\star| - \lceil\eps|V_i^\star|\rceil \\
    &= (M+1)\eps|V_i^\star| - \lceil\eps|V_i^\star|\rceil - \lfloor M\eps|V_i^\star|\rfloor \\
    &< 1,
    \end{align*}
    from which it follows that $|V_i^\star \cap [-h']|-|V_i \cap [-h']| \leq 0$ as both terms are integers.
    }

    We return to the proof of the lemma. 
    Suppose that $|V \cap [h]| < |V^* \cap [h]|$ for some $h \in [n]$ and let $h$ be minimal with this property. By the minimality of $h$, we must have $h \notin V\supseteq V_i$. By the induction hypothesis and \Cref{claim:alg}, \begin{align*}
    |V \cap [h]| &= |(V \setminus V_i) \cap [h]| + |V_i \cap [h]| \\
                     &\geq |(V^\star \setminus V^\star_i) \cap [h]| + |V^\star_i \cap [h]| \\
                     &= |V^\star \cap [h]|,
    \end{align*}
    which is a contradiction.

    Suppose that $|V \cap [-h']| < |V^* \cap [-h']|$ for some $h' \in [n]$ and let $h'$ be minimal with this property. Let $\ma{h}=n-h'+1$. Then $h \notin V$. By the induction hypothesis and \Cref{claim:alg}, \begin{align*}
    |V \cap [-h']| &= |(V \setminus V_i) \cap [-h']| + |V_i \cap [-h']| \\
                     &\geq |(V^\star \setminus V^\star_i) \cap [-h']| + |V^\star_i \cap [-h']| \\
                     &= |V^\star \cap [-h']|,
    \end{align*}
    which is again a contradiction.
\end{proof}

\Cref{lem:pq} now follows relatively directly by combining the last two lemmas with full enumeration.

\begin{proof}[Proof of~\Cref{lem:pq}] We need to distinguish a few cases, to cover cases in which we cannot apply \Cref{lem:alg} and \Cref{lem:ann}:
    \begin{itemize}[leftmargin=2cm] 
        \item[Case 1:] $a<4(1+\eps)/\eps^2+1$. In this case we can in time $n^{O_\eps(a'/a)}$ fully enumerate all partial policies of length $a'$, showing the claim, even without the $1+\eps$ factor.
        \item[Case 2:] $a\geq 4(1+\eps)/\eps^2+1$. This implies that $a\geq 2(1+\eps)^{2}/\eps$ (using $1/\eps\in\mathbb{N}$), which is needed to apply \Cref{lem:ann}. The same assumption allows us to define $|V_1^\star|=\lfloor (a-1)/(1+2\eps)\rfloor$ and get that $\eps|V_1^\star|\geq 2/\eps$.

        \begin{itemize} 
            \item[Case 2a:] $a'-\lfloor (a-1)/(1+2\eps)\rfloor < 1/\eps$. We will run into trouble defining bucket sizes so as to apply \Cref{lem:alg}. Since $a$ may be too large, we cannot simply enumerate all partial policies of length $a'$. We define $b=2$ and $|V_2^\star|=a'-|V_1^\star|$. We apply full enumeration to obtain $m(V^\star_1,j)$ for all $j \in [1/\eps-1]$. We obtain $V_1$ by \Cref{alg:alg} and $V_2$ by full enumeration, in total time $n^{O_\eps(1)}$. Then, for the correct elements from the enumerations, \Cref{lem:alg} guarantees that also the dominance condition needed to apply \Cref{lem:ann} is fulfilled.
            \item[Case 2b:] $a'-\lfloor (a-1)/(1+2\eps)\rfloor\geq 1/\eps$. We define $|V_2^\star|,\dots,|V_b^\star|$ in the following way: Start with a counter with value equal to the total size of $|V_2^\star|,\dots,|V_b^\star|$, which, by definition of $|V^\star_1|$, is $a'-\lfloor (a-1)/(1+2\eps)\rfloor\geq 1/\eps$. Open a new bucket of size $\lceil\eps|V_1^\star|/2\rceil \geq 1/\eps$ and decrease the counter by $\lceil\eps|V_1^\star|/2\rceil$ until the counter
            drops below $\lceil\eps|V_1^\star|/2\rceil$. Increase the size of the final bucket by the remaining counter value. This way, $1/\eps \leq |V_i^\star|\leq 2\lceil\eps|V_1^\star|/2\rceil-1 \leq \eps|V_1^\star|$ for all $i\in[b]$, and $b\in O_\eps(a'/a)$. We may thus enumerate $m(V^\star_i,j)$ for all $i\in[b]$ and $j \in [1/\eps-1]$, in time $n^{O_\eps(a'/a)}$, and apply \Cref{lem:alg} to each element in the enumeration. For the correct element from the enumeration, we may then also apply \Cref{lem:ann}.
        \end{itemize}
    \end{itemize}
    Note that, in Cases 2a and 2b, we need to scale down $\eps$ by a constant to obtain the necessary guarantee from the application of \Cref{lem:ann}.
\end{proof}

\subsection{The Expected Cost of Evaluating Symmetric Functions}
\label{sec:computing}

In this section, we give a useful expression for the expected cost of a
non-adaptive policy for any symmetric Boolean function. We use this
expression to give a polynomial-time algorithm for computing this expected
cost and to prove \Cref{lem:propstop}.

Fix a symmetric Boolean function $f : \{0,1\}^n \to \{0,1\}$. Recall that it is given through thresholds $0 = t_1 < t_2 <
\dots < t_T < t_{T+1} = n+1$ such that the function values changes at the thresholds, i.e., $f(x) \neq f(y)$ if $t_j \leq
\sum_{i=1}^n x_i < t_{j+1}$ and $t_{j+1} \leq
\sum_{i=1}^n y_i < t_{j+2}$, for $j \in [T-1]$. We keep $f$ and these
thresholds fixed throughout this section. 

\begin{lemma}
    \label{lem:costsymmetric}
    Let $\pi$ be any non-adaptive policy for $f$, let $\ell \in
    \{0,1,\dots,n\}$, and let $B \subseteq [T]$ be the set of indices $j \in
    [T]$ such that $\ell \geq t_j + (n - t_{j+1} + 1)$. Then
    \[
        \Pr[\cost(\pi) \leq \ell] = \sum_{j\in B}
        \left(1-\left(\Pr\left[\sum_{i=1}^{\ell} x_{\pi(i)} < t_j\right] +
            \Pr\left[\sum_{i=1}^{\ell} (1-x_{\pi(i)}) <
            n-t_{j+1}+1\right]\right)\right).
        \]
\end{lemma}

\begin{proof}
    We start by observing that the value of $f$ is determined if, for some $j
    \in [T]$, at least $t_j$ $1$s have been found \text{and} enough $0$s
    have been found to rule out that the total number of $1$s is $t_{j+1}$
    or more. Thus, $\pi$ is done evaluating $f$ after $\ell$ tests iff 
    \begin{equation}
        \label{eq:stopcondition}\exists j \in [T] : \sum_{i=1}^\ell x_{\pi(i)} \geq t_j \text{ and
        } \sum_{i=1}^\ell (1-x_{\pi(i)}) \geq n-t_{j+1}+1.
    \end{equation}
    Note that \eqref{eq:stopcondition} cannot be satisfied simultaneously
    for distinct $j \in [T]$. Furthermore, \eqref{eq:stopcondition} cannot
    be satisfied at all for $j \notin B$ as adding the two inequalities gives
    \[
        \ell = \sum_{i=1}^\ell x_{\pi(i)} + \sum_{i=1}^\ell (1-x_{\pi(i)}) \geq
        t_j + (n - t_{j+1} + 1).
    \]
    Now, we have
    \begin{align*}
        \Pr[\cost(\pi) \leq \ell] 
        &= \sum_{j = 1}^T \Pr\left[\sum_{i=1}^{\ell} x_{\pi(i)} \geq t_j \text{ and } \sum_{i=1}^{\ell} (1-x_{\pi(i)}) \geq n-t_{j+1} + 1 \right] \\ 
        &= \sum_{j \in B} \Pr\left[\sum_{i=1}^{\ell} x_{\pi(i)} \geq t_j \text{ and } \sum_{i=1}^{\ell} (1-x_{\pi(i)}) \geq n-t_{j+1} + 1 \right] \\ 
        &= \sum_{j\in B}\left(1 - \Pr\left[ \sum_{i=1}^{\ell} x_{\pi(i)} < t_j \text{ or
    } \sum_{i=1}^{\ell} (1-x_{\pi(i)}) < n-t_{j+1}+1\right]\right) \\
    &= \sum_{j\in B} \left(1-\left(\Pr\left[\sum_{i=1}^{\ell} x_{\pi(i)} < t_j\right] +
        \Pr\left[\sum_{i=1}^{\ell} (1-x_{\pi(i)}) < n-t_{j+1}+1\right]\right)\right),
\end{align*}
where the first equality follows since \eqref{eq:stopcondition} cannot be
satisfied for distinct $j \in [T]$ simultaneously, the second since
\eqref{eq:stopcondition} cannot be satisfied for $j \notin B$. The last
inequality follows from the fact that the two events in the disjunction are
disjoint as adding the two inequalities gives $\ell < t_j + (n - t_{j+1} +
1)$, which does not hold for $j \in B$.
\end{proof}

We now give a polynomial-time algorithm for computing the
expected cost of a non-adaptive policy for any symmetric Boolean function.
This was done for $k$-of-$n$ functions in \cite{grammel2022}, and our
approach is an extension of this using \Cref{lem:costsymmetric}.

\begin{thm}
    \label{thm:computecost}
    The expected cost $\E[\cost(\pi)]$ of any non-adaptive policy $\pi$
    for any symmetric Boolean function $f$ can be computed in polynomial
    time.
\end{thm}

\begin{proof}
Fix a non-adaptive policy $\pi$. 
For $k,\ell \in \{0,1,\dots,n\}$, define \[ 
    d(\ell,k) = \Pr\left[\sum_{i=1}^\ell x_{\pi(i)} = k\right]. 
\]

The values $d(\ell,k)$ can be computed in time $O(n^2)$ using a
straightforward dynamic program with the recurrence
\[
    d(\ell,k) = d(\ell-1,k)(1-p_{\pi(i)}) +d(\ell-1,k-1)p_{\pi(i)}.
\]

%
%

Using the values $d(\ell,k)$, we can compute the values
\[
    \Pr\left[\sum_{i=1}^\ell x_{\pi(i)} < t_j\right] = \sum_{k=0}^{t_j-1} d(\ell,k)
    \text{ and } \Pr\left[\sum_{i=1}^\ell (1-x_{\pi(i)}) < n-t_{j+1}+1\right] =
    \sum_{k=0}^{n-t_{j+1}} d(\ell,\ell-k),
\]
which is all we need to evaluate the right-hand side of the expression in
\Cref{lem:costsymmetric} for all $\ell \in \{0,1,\dots,n\}$.

Using $\Pr[\cost(\pi) \leq \ell]$ for all $\ell \in \{0,1,\dots,n\}$, we
can compute $\E[\cost(\pi)]$ using
\[
    \E[\cost(\pi)] = \sum_{\ell=1}^n \Pr[\cost(\pi) \geq \ell] =
    \sum_{\ell=1}^n 1 - \Pr[\cost(\pi) \leq \ell-1].
\]
This completes the proof.
\end{proof}

We now move on to proving \Cref{lem:propstop}. The proof again relies on
\Cref{lem:costsymmetric}.

\begin{proof}[Proof of \Cref{lem:propstop}]
    Fix policies $\pi$ and $\pi'$ and lengths $\ell$ and $\ell'$ such that
    the length-$\ell'$ prefix of $\pi'$ dominates the length-$\ell$ prefix
    of $\pi$. 

    Let $B \subseteq [T]$ be the sets of indices $j \in [T]$ such that
    $\ell \geq t_j + (n - t_{j+1} + 1)$ and similarly let $B' \subseteq
    [T]$ be the sets of indices $j \in [T]$ such that $\ell' \geq t_j + (n
    - t_{j+1} + 1)$. As the length-$\ell'$ prefix of $\pi'$ dominates the
    length-$\ell$ prefix of $\pi$, we have $\ell' \geq \ell$ and thus $B
    \subseteq B'$.

    By \Cref{lem:costsymmetric}, since $B \subseteq B'$, and the
    dominance of the $\ell'$-length prefix of $\pi'$ over the $\ell$-length
    prefix of $\pi$, we have
    \begin{align*}
        \Pr[\cost(\pi') \leq \ell'] 
        &= \sum_{j \in B'} \left(1 - \left(\Pr\left[ \sum_{i=1}^{\ell'} x_{\pi'(i)} < t_j\right] + \Pr\left[\sum_{i=1}^{\ell'} (1-x_{\pi'(i)}) < n-t_{j+1}+1\right]\right)\right) \\
        &\geq \sum_{j \in B} \left(1 - \left(\Pr\left[ \sum_{i=1}^{\ell'}
                x_{\pi'(i)} < t_j\right] + \Pr\left[\sum_{i=1}^{\ell'}
        (1-x_{\pi'(i)}) < n-t_{j+1}+1\right]\right)\right) \\
        &\geq \sum_{j \in B} \left(1 - \left(\Pr\left[ \sum_{i=1}^{\ell}
                x_{\pi(i)} < t_j\right] + \Pr\left[\sum_{i=1}^{\ell}
            (1-x_{\pi(i)}) < n-t_{j+1}+1\right]\right)\right) \\
        &= \Pr[\cost(\pi) \leq \ell].
    \end{align*}
    We now get \[
        \Pr[\cost(\pi') > \ell'] = 1-\Pr[\cost(\pi') \leq \ell'] \leq
        1-\Pr[\cost(\pi) \leq \ell] = \Pr[\cost(\pi) > \ell], 
    \] which completes the proof.
\end{proof}

\section{Conclusion}

First, it remains as an open question whether there also exists a PTAS for the arbitrary-cost case. 
It seems plausible that one can use standard techniques to get the number of relevant cost classes per bucket down to a logarithmic number and to then use a similar approach as described in \Cref{subsec:bounded} for each cost class, which would result in a QPTAS. It is unclear to us whether this approach can be modified to obtain a PTAS. 

Second, recall that it is not known whether computing the optimal non-adaptive policy for evaluating symmetric functions is NP-hard, even in the arbitrary-cost case. Although there are many problems in stochastic combinatorial optimization with this status, and it is common to develop approximation algorithms for them, that fact may be particularly intriguing for this fundamental problem.

\bibliography{jour}

\end{document}